\newtheorem{Theorem}{Theorem}
\definecolor{darkred}{rgb}{1, 0.1, 0.3}
\definecolor{darkblue}{rgb}{0.1, 0.1, 1}
\definecolor{darkgreen}{rgb}{0,0.6,0.5}
\newcommand{\blind}{1}
\begin{document}

\def\spacingset#1{\renewcommand{\baselinestretch}%
{#1}\small\normalsize} \spacingset{1}


\if1\blind
{
  \title{\bf Large-Scale Low-Rank Gaussian Process Prediction with Support Points}
  \author{Yan Song$^{1,2}$, Wenlin Dai$^1$\thanks{
    Corresponding author; email: wenlin.dai@ruc.edu.cn \textit{}}\hspace{.2cm}\\
    \vspace{-12pt}
    $^1$Institute of Statistics and Big Data, Renmin University of China, China\\
    and \\
    Marc G. Genton$^2$ \\
    $^2$Statistics Program, King Abdullah University of Science and Technology, Saudi Arabia}
  \maketitle
} \fi

\if0\blind
{
  \title{\bf Large-Scale Low-Rank Gaussian Process Prediction with Support Points}
  \author{ \\
     \vspace{30 mm}
     }
  \maketitle
} \fi


\begin{abstract}
Low-rank approximation is a popular strategy to tackle the ``big $n$ problem" associated with large-scale Gaussian process regressions. Basis functions for developing low-rank structures are crucial and should be carefully specified. Predictive processes simplify the problem by inducing basis functions with a covariance function and a set of knots. The existing literature suggests certain practical implementations of knot selection and covariance estimation; however, theoretical foundations explaining the influence of these two factors on predictive processes are lacking. In this paper, the asymptotic prediction performance of the predictive process and Gaussian process predictions are derived and the impacts of the selected knots and estimated covariance are studied. The use of support points as knots, which best represent data locations, is advocated. Extensive simulation studies demonstrate the superiority of support points and verify our theoretical results. Real data of precipitation and ozone are used as examples, and the efficiency of our method over other widely used low-rank approximation methods is verified.
\end{abstract}

\noindent%
{\it Keywords:}  Convergence rate; Kernel ridge regression; Kriging; Nystr\"{o}m approximation; Predictive process; Spatial statistics.    


\spacingset{1.9} 
\section{Introduction}
\label{sec:Introduction}
Gaussian processes \citep[GPs;][]{GP} are extensively used for solving regression problems in many fields such as spatial statistics \citep{Stein1999,HBS,Cressie2015}, computer experiments \citep{Sacks1989,Santner2003}, and machine learning \citep{GP,Liu2020}. The underlying regression function is assumed to be a realization of a GP, whose mean trend and covariance function must be evaluated. Using the conditional multivariate normal distribution, a GP provides each location with a closed-form prediction as well as a predictive interval. 

Assume that a dataset $D=\{(\mathbf{x}_i,y_i)\}_{i=1}^n$ satisfies the GP model, i.e.,
\vspace{-20pt}
\begin{equation}
    y_i=f(\mathbf{x}_i)+\epsilon_i,\quad i=1,\ldots,n,
    \label{eq:model}
    \vspace{-20pt}
\end{equation}
where $\mathbf{x}_i\in\mathcal{X}\subset\mathbb{R}^d$ and follows a distribution $F_{\mathbf{x}}$. The $f(\mathbf{x}_i)$'s are the realization of a Gaussian process $Z(\cdot)$ on locations $\{\mathbf{x}_i\}_{i=1}^n$, and the nugget effects $\epsilon_i$'s are independent and follow a $\mathcal{N}(0,\tau^2)$ distribution. Under the GP regression, $f$ is considered as a random function and is not distinguished from $Z$. Assuming $Z(\cdot)\sim\mathrm{GP}(0,c)$, where $c(\cdot,\cdot)$ is a positive definite covariance function. The zero mean assumption simplifies the presentation, as estimating and subtracting the mean is typically computationally feasible, and theoretical results for a general case can be obtained by detrending the process \citep{burt2019rates,Vecchia_Katzfuss2020}. Then, the response vector $\mathbf{y}=(y_1,\ldots,y_n)^{\top}$ follows a multivariate normal distribution $\mathcal{N}_n(\mathbf{0},\mathbf{C}+\tau^2\mathbf{I}_n)$, where $\mathbf{C}=\{c(\mathbf{x}_i,\mathbf{x}_j)\}_{i,j=1}^n$. The covariance function $c$ is usually unknown in practice and is represented by a specific parametric form $c(\cdot,\cdot;\boldsymbol{\theta})$, where the parameter vector $\boldsymbol{\theta}$ can be obtained by maximizing the likelihood function with respect to $\boldsymbol{\theta}$: $(2\pi)^{-n/2}|\mathbf{C}(\boldsymbol{\theta})+\tau^2\mathbf{I}_n|^{-1/2}\exp\left [-\frac{1}{2}\mathbf{y}^{\top}\{\mathbf{C}(\boldsymbol{\theta})+\tau^2\mathbf{I}_n\}^{-1}\mathbf{y}\right]$. We avoid theoretical discussions on covariance estimation due to its complexity and methodological dependency, opting instead to demonstrate how an estimated covariance impacts predictions.
With the conditional probability density of the multivariate normal distribution, the prediction based on $c$ at any $\mathbf{x}\in\mathcal{X}$ is
\vspace{-20pt}
\begin{equation}
    \hat f_{c}(\mathbf{x})=\mathbf{c}(\mathbf{x})^{\top} (\mathbf{C}+\tau^2 \mathbf{I}_n)^{-1}\mathbf{y},
    \label{eq:fullpred}
    \vspace{-10pt}
\end{equation}
where $\mathbf{c}(\mathbf{x})=\{c(\mathbf{x},\mathbf{x}_1),\ldots,c(\mathbf{x},\mathbf{x}_n)\}^\top$. Fitting data of size $n$ with a GP requires $O(n^3)$ operations and $O(n^2)$ memory, and these requirements may easily exhaust computational resources even with a moderately large $n$. 

Many methods are available to tackle the challenge of handling massive datasets with GPs. We refer the readers to \citet{heaton2017methods} and \citet{Liu2020} for their comprehensive reviews. The methods of likelihood approximation include the Vecchia approximation \citep{Vecchia1988,Stein2004,Vecchia_Katzfuss2020,Vecchiageneral}, composite likelihood \citep{composite,blockcomposite}, and hierarchical low-rank approximation \citep{Huanglikeli}. Covariance tapering multiplies the covariance function with a compactly supported one such that the tapered covariance matrix is sparse and can be solved easily \citep{taper2006,taper2008,taper2013}. \citet{Rue2002}, \citet{Rue2005}, and \citet{Markov_Marc2015} discussed the methods of Markov random field approximation, which are primarily used for lattice data. Methods based on distributed computing \citep{parallel2013,distribitedGP,Katzfuss2017,ExaGeoStat1,ExaGeoStat2,ExaGeoStat3}, resampling \citep{liang2013,barbian2017spatial}, and exact approaches using the \textit{ExaGeoStat} software \citep{ExaGeoStat1} were recently proposed.

The low-rank approximation is another important approach for GPs with massive data. It approximates the original GP of mean zero with a low-rank one, which is represented by a linear combination of specified basis functions with random weights, e.g., the fixed-rank kriging \citep[FRK,][]{FRK2008,FRK_intro}, LatticeKrig \citep{LK2015}, and predictive processes \citep{PP,PPimprove}. Combining low-rank approximations with the concept of covariance tapering can help develop multiple efficient methods \citep{sang2013,MRA2017}. In this context, the choice of the basis functions is important to the performance of these methods. FRK and LatticeKrig suggest multiresolution bases to capture information at different scales. The construction of basis functions involves the delicate selections of multiple tuning parameters, such as the types of basis functions, levels of resolution, and information in the coarsest resolution \citep{FRK_intro,LKPackage}. \citet{autoFRK} proposed adaptive basis functions to avoid the manual allocation of scales. 

Predictive processes simplify the aforementioned problem by developing basis functions with a covariance function and a set of specified knots. In this context, the covariance estimation and knot configuration are of particular interest. Specifically, \citet{PP} selected a set of knots $D_k^*=\{\mathbf{x}_i^*\}_{i=1}^k$ well dispersed over $\mathcal{X}$, which may or may not belong to $\{\mathbf{x}_i\}_{i=1}^n$. Then, they projected the original process $Z$ onto a linear space spanned by $\mathbf{z}^*=\{Z(\mathbf{x}_1^*),\ldots,Z(\mathbf{x}_k^*)\}^\top$ and induced the predictive process $\tilde{Z}(\mathbf{x})=\mathbb{E}\{Z(\mathbf{x})|\mathbf{z}^*\}=\mathbf{c}^*(\mathbf{x})^\top \mathbf{C}^{*-1}\mathbf{z}^*$, where $\mathbf{c}^*(\mathbf{x})=\{c(\mathbf{x},\mathbf{x}_1^*),\ldots,c(\mathbf{x},\mathbf{x}_k^*)\}^\top$ and $\mathbf{C}^*=\{\mathbf{c}^*(\mathbf{x}_1^*),\ldots,\mathbf{c}^*(\mathbf{x}_k^*)\}$. Correspondingly, the process $\tilde{Z}\sim\mathrm{GP}(0,\tilde{c})$, where  $\tilde{c}(\mathbf{x}_i,\mathbf{x}_j)=\mathbf{c}^*(\mathbf{x}_i)^\top \mathbf{C}^{*-1}\mathbf{c}^*(\mathbf{x}_j)$. Replacing $c$ in \eqref{eq:fullpred} with $\tilde{c}$ yields the low-rank prediction 
\vspace{-20pt}
\begin{equation}
    \tilde{f}_c(\mathbf{x})=\mathbf{c}^*(\mathbf{x})^\top \mathbf{C}^{*-1} \mathbf{C}_{nk}^{*\top}(\mathbf{C}_{nk}^* \mathbf{C}^{*-1}\mathbf{C}_{nk}^{*\top}+\tau^2 \mathbf{I}_n)^{-1}\mathbf{y},
    \label{eq:pp}
    \vspace{-20pt}
\end{equation}
where $\mathbf{C}_{nk}^*=\{\mathbf{c}^*(\mathbf{x}_1),\ldots,\mathbf{c}^*(\mathbf{x}_n)\}^\top$. Furthermore, using the Sherman--Morrison--Woodbury formula \citep{smw}, we can rewrite \eqref{eq:pp} as follows: \\
\vspace{-20pt}
\begin{equation*}
    \tilde{f}_c(\mathbf{x})=\mathbf{c}^*(\mathbf{x})^\top \mathbf{C}^{*-1} \mathbf{C}_{nk}^{*\top}\{\tau^{-2}\mathbf{I}_n-\tau^{-2}\mathbf{C}_{nk}^{*}(\tau^2\mathbf{C}^*+\mathbf{C}_{nk}^{*}\mathbf{C}_{nk}^{*\top})^{-1}\mathbf{C}_{nk}^{*\top}\}\mathbf{y},
    \vspace{-10pt}
\end{equation*}
which includes the inversion of only a $k\times k$ matrix $\tau^2\mathbf{C}^*+\mathbf{C}_{nk}^{*}\mathbf{C}_{nk}^{*\top}$; hence, this step reduces the computational time to $O(nk^2)$. Finally, \citet{PP} implemented the entire procedure of parameter estimation and prediction by adopting the Bayesian approach. \citet{PPimprove} proposed a modified predictive process to debias the overestimation of $\tau^2$. Moreover, inspired by strategies in spatial designs, they developed an algorithm to select those knots that induce a predictive process mimicking the parent process better. Predictive processes are also termed as sparse Gaussian processes in the machine learning literature, where knots $D_k^*$ are termed as inducing points \citep{titsias2009variational,hensman2013gaussian,burt2019rates,burt2020convergence}.  

The idea of low-rank approximations is not limited to the GP model and has been applied to more problems, whereas the basis selection is still being discussed. The Nystr\"{o}m approximation shares some commonality with the low-rank approximation. It approximates a positive definite matrix $\mathbf{A}\in\mathbb{R}^{n\times n}$ by a low-rank one  $\tilde{\mathbf{A}}=\mathbf{AS}(\mathbf{S}^{\top}\mathbf{A}\mathbf{S})^{-}\mathbf{S}^{\top}\mathbf{A}$, where the ``sketching matrix" $\mathbf{S}\in\mathbb{R}^{n\times m}$, with $m<n$, sketches the desirable information from $\mathbf{A}$ and $\mathbf{K}^{-}$ is the generalized inverse of a matrix $\mathbf{K}$ \citep{Nystrom1,Nystrom2,Nystrom3}. This strategy has been used in the GP \citep{efficient2013} and the kernel ridge regression (KRR) models \citep{Rudi2015,Fastkernel}. The matrix $\mathbf{A}$ is prespecified and $\mathbf{S}$ determines the performance of the approximation.

The KRR is closely linked to GPs. In particular, $\hat f_c$ is a solution of a KRR obtained by considering $f$ in \eqref{eq:model} as a deterministic function:
\vspace{-20pt}
\begin{equation}
    \hat f_c=\arg\min_{\eta\in\mathcal{H}_c}\frac{1}{n}\sum_{i=1}^n\{y_i-\eta(\mathbf{x}_i)\}^2+\frac{\tau^2}{2n}J_c(\eta),
\label{eq:krr}
\vspace{-15pt}
\end{equation}
where $\mathcal{H}_c=\{\eta=\sum_{r=1}^{\infty}\alpha_r c(\cdot,\mathbf{x}_r): \alpha_r\in\mathbb{R}, \mathbf{x}_r\in\mathcal{X} \text{ such that }  \|\eta\|_{\mathcal{H}_c}^2=\sum_{r,l=1}^{\infty}\alpha_r\alpha_l c(\mathbf{x}_r,\mathbf{x}_l)<\infty\}$ is the reproducing kernel Hilbert space induced by $c$ with norm $J_c(\eta)=\|\eta\|_{\mathcal{H}_c}^2$. Wahba's representer theorem \citep{wahba1990spline} ensures that $\hat f_c$ lies in a finite subspace of $\mathcal{H}_c$ spanned by $\{c(\cdot,\mathbf{x}_i)\}_{i=1}^n$, i.e., $\mathcal{H}_{n,c}$. The low-rank prediction $\tilde{f}_c$ is the solution of \eqref{eq:krr} over $\mathcal{H}_{k,c}^*=\mathrm{span}\{c(\cdot,\mathbf{x}_i^*)\}_{i=1}^k$, induced by knots $D_k^*$. Using a prespecified kernel function $c$, the performance of the low-rank prediction $\tilde{f}_c$ depends on the choice of knots. 

There is a considerable body of literature discussing the selection of knots. \citet{PP} and \citet{PPimprove} drew inspiration from spatial designs, such as space-filling, to select knots. \citet{titsias2009variational} proposed a variational formulation that simultaneously estimates the covariance and selects knots by minimizing the Kullback-Leibler (KL) divergence between the variational distribution and the exact posterior distribution of the latent function values. However, the performance of the predictive process with these knots was only numerically tested without theoretical justification. The influence of knots on the asymptotic properties of $\tilde{f}_{\hat c}$ should be investigated. \citet{Rudi2015} and \citet{Fastkernel} chose knots that improve the prediction accuracy of $\tilde{f}_{c}$. They derived the optimal probabilities that minimize the learning bounds, which are based on a concept called ridge leverage scores (RLS). However, they used a random sampling strategy to obtain knots, which brings randomness and larger uncertainty. \citet{burt2019rates} and \citet{burt2020convergence} investigated how $k$ needs to grow with $n$ so that the KL divergence between the approximate model and the exact posterior can be arbitrarily small. However, they only focused on knots of size $k$ sampled from a fixed-size determinantal point process ($k$-DPP) and using RLS without considering the influence of a mis-specified covariance.

In this study, we investigate the asymptotic performance of the predictive process and the original process upon linking GP with KRR. Given the true covariance function $c$, we demonstrate that the predictive process may achieve the same convergence rate as the original GP. The convergence rate is determined by a parameter, $\gamma$, which is proposed as a measure of smoothness for the process and for the covariance function. Consequently, we derive a knot-selection criterion, i.e., support points generated by the data locations, which are space-filling, and the best approximation of the locations with respect to the energy distance. Moreover, we demonstrate how the estimated covariance function, $\hat c$, affects the convergence rates of the two processes through the smoothness parameter $\hat\gamma$ for $\hat c$. As a byproduct, the adjusted order of $\tau^2$ is obtained to improve the convergence rates of $\tilde{f}_{\hat c}$ and $\hat f_{\hat c}$. Extensive numerical studies are conducted to demonstrate the superior performance of the support points and validate our theoretical results. We claim that the convergence rate of $\tilde{f}_{\hat c}$ can be identical to that of $\hat f_{\hat c}$ if $D_k^*$ is close enough to $\{\mathbf{x}_i\}_{i=1}^n$. Thus, the efficiency of the low-rank approximation is validated.


The remainder of this paper is organized as follows. In Section~\ref{sec:method}, we provide theoretical results and describe the influence of knots and estimated covariance functions on the prediction performance of a predictive process. In Section~\ref{sec:simulation}, we present results of our numerical studies to demonstrate our approach and confirm the theoretical results. In Section~\ref{sec:realdata}, we describe the application of various low-rank approximations to two real data examples and compare their results. In Section~\ref{sec:discussion}, we present the conclusions and discuss future work. The proofs of the theoretical results are provided in the Supplementary Materials.

\section{Convergence Rates of Low-Rank Approximations with Support Points}
\label{sec:method}
\subsection{Known covariance function}
\label{sec:reppoints}
Let $D_k^*$ be a set of representative points of $\{\mathbf{x}_i\}_{i=1}^n$ used to develop a solution space for searching the low-rank approximation, hereinafter referred to as {\em rep-points}. With certain regularity conditions and a known covariance function $c$, the preferred configuration and size of $D_k^*$ for obtaining $\tilde{f}_c$ with the asymptotically optimal prediction performance are provided in this section.

The selection of $D_k^*$ has been extensively discussed. \citet{PP} tested the performances of three strategies, including the use of grids, lattice plus close pairs, and lattice plus infill design, for both parameter estimation and prediction. The latter two options were inspired from the concepts of spatial design \citep{diggle2006bayesian}. Compared to the configuration, the size of $D_k^*$ has greater influence on the parameter estimation. \citet{PPimprove} were motivated by spatial design concepts such as the space-filling design \citep{Nychka1998} and other optimal designs for various targets \citep{zhu2005spatial,xia2006}. They developed an algorithm for sequentially selecting $D_k^*$ such that the predictive process can approximate the parent process better. Given a kernel (covariance) function, \citet{Rudi2015} and \citet{Fastkernel} proposed randomly selecting subsamples from the full data as per the RLS. Due to the inherent randomness, their $D_k^*$ did not exhibit a space-filling property or a preference for certain points. In general, rep-points that are well dispersed over the region \citep{heaton2017methods} are preferable. Theoretically, we will show herein that the distribution of $D_k^*$ should approximate $F_{\mathbf{x}}$ as closely as possible. 

Before presenting the theoretical results, we introduce the necessary notations and conditions. For any $\eta$, $\xi\in\mathcal{L}_2(\mathcal{X})$, where $\mathcal{L}_2(\mathcal{X})$ is the set of square integrable functions on $\mathcal{X}$, let $\bar{V}(\eta,\xi)=\int_{\mathcal{X}}\eta(\mathbf{x})\xi(\mathbf{x})\mathbf{d}F_{\mathbf{x}}(\mathbf{x})$. We assess the prediction performance of $\tilde{f}_c$ over the whole region $\mathcal{X}$ with $V(\tilde{f}_{c}-f)\doteq \bar{V}(\tilde{f}_{c}-f,\tilde{f}_{ c}-f)=\int_{\mathcal{X}}\{\tilde{f}_{ c}(\mathbf{x})-f(\mathbf{x})\}^2\mathbf{d}F_{\mathbf{x}}(\mathbf{x})$, which is commonly employed in regression literature \citep{Ma2015Efficient,Rudi2015}, acknowledging the impact of $F_{\mathbf{x}}$ on the prediction.
Assuming $F_{\mathbf{x}}$ is a finite Borel measure on $\mathcal{X}$ and the covariance $c$ is continuous, Mercer's theorem \citep{Berlinet2004} suggests that $c(\mathbf{x}_i,\mathbf{x}_j)$ has an eigen expansion: $c(\mathbf{x}_i,\mathbf{x}_j)=\sum_{r=1}^{\infty}\rho_r^{-1}\phi_r(\mathbf{x}_i)\phi_r(\mathbf{x}_j)$ for   $\forall \mathbf{x}_i,\mathbf{x}_j\in \mathcal{X}$,
where the positive eigenvalues are in the decreasing order of $\rho_1^{-1}\ge \rho_2^{-1}\ge\cdots$, the eigen functions $\phi_r$ satisfy $\bar V(\phi_r,\phi_s)=\delta_{rs}$ and $\int_{\mathcal{X}}c(\mathbf{x}',\mathbf{x})\phi_r(\mathbf{x}')\mathbf{d}F_{\mathbf{x}}(\mathbf{x}')=\rho_r^{-1}\phi_r(\mathbf{x})$ with the Kronecker $\delta$. Then, any $\eta\in\mathcal{H}_c$ can be expanded as $\eta(\mathbf{x})=\sum_{r=1}^{\infty}\eta_r\phi_r(\mathbf{x})$ with $\eta_r=\bar V(\eta,\phi_r)$ and the norm $J_c(\eta)=\sum_{r=1}^{\infty}\rho_r\eta_r^2<\infty$. Three regularity conditions are required to analyze $V(\tilde{f}_c-f)$: 
[R1] $\mathcal{X}$ is a compact set with a Lipschitz boundary and satisfies an interior cone condition \citep{samplepath2018}. [R2] The eigenvalue sequence $\{\rho_r^{-1}\}_{r}$ follows a power law decrease. That is, $\rho_r>\beta r^{\gamma}$ holds for certain $\gamma>1$ and $\beta>0$. [R3] For any $r$ and $s$,  $V(\phi_r\phi_s)$ is bounded and the $(d+1)/2$-th order differential of $\phi_r\phi_s$ belongs to $\mathcal{L}_2(\mathcal{X})$.

Condition [R1] indicates that $\mathcal{X}$ is compact and has no $\prec$-shaped region on its boundary \citep{samplepath2018}. In [R2], the parameter $\gamma$ is important to indicate the smoothness of $Z$ and $c$. The smoother the process, the larger the value of $\gamma$. A more intuitive demonstration of $\gamma$ is reported in Section~\ref{sec:subsec:gammavalues}. [R3] shows two conditions about $\phi_r\phi_s$ for $\forall r,s$. The first one requires $V(\phi_r\phi_s)$ to be bounded, which virtually calls for a uniformly bounded fourth moment of $\phi_r(\mathbf{x})$ and is a widely-used regularity condition in the literature \citep{gu2013smoothing,Ma2015Efficient,meng2020}. The second one requires the $(d+1)/2$-th order differential of $\phi_r\phi_s$ to be square-integrable. It implies that $\phi_r\phi_s$ cannot be too rough. A similar condition has been used as a regularity condition in \citet{meng2020}. As \citet{gu2013smoothing} commented, the conditions are extremely difficult to verify because $\phi_r$ has no closed form.  However, they appear mild because $\phi_r$ typically decrease in smoothness but do not grow in magnitude. In Section S3 of the Supplementary Materials, we give an example satisfying [R1]--[R3] to help with the understanding of these conditions. Moreover, this example shows how $\gamma$ indicates the smoothness of $c$ or $Z$. 

Let the empirical distribution of $D_k^*$ on $\mathcal{X}$ be denoted as $F_k^*$ and let  $\mathbf{x}$, $\mathbf{x}'\stackrel{i.i.d.}{\sim}F_{\mathbf{x}}$. The energy distance between $F_k^*$ and $F_{\mathbf{x}}$ is $E(F_{\mathbf{x}},F_k^*)=\frac{2}{k}\sum_{j=1}^k\mathbb{E}(\|\mathbf{x}-\mathbf{x}_j^*\|_2)-\mathbb{E}(\|\mathbf{x}-\mathbf{x}'\|_2)-\frac{1}{k^2}\sum_{i=1}^k\sum_{j=1}^k\mathbb{E}(\|\mathbf{x}_i^*-\mathbf{x}_j^*\|_2)$.
This distance was proposed to test the goodness-of-fit of $F_k^*$ to $F_{\mathbf{x}}$ for high-dimensional $\mathbf{x}$ \citep{Energydistance2,Energydistance1,Supportpoints}, and it is used for the first time to assess the performance and influence of rep-points in the predictive process. Based on the link between GPs and KRRs, we examine the convergence rates of $\hat f_c$ and $\tilde{f}_c$. The following theorem indicates that $\tilde{f}_c$ based on rep-points can converge to $f$ at the same rate as $\hat f_c$ based on full data; this implication confirms the efficiency of low-rank approximations with rep-points. Moreover, the smoother the process, the faster is the convergence. 
\begin{Theorem}
(Convergence rate of $\tilde{f}_c$) Assume that the regularity conditions [R1]--[R3] hold, $\tau^2$ is a constant, and $\gamma>2$. As $n\to\infty$, the prediction in \eqref{eq:fullpred} has a convergence rate $V(\hat f_c-f)=O_p\left\{n^{-(1-1/\gamma)}\right\}$. For $D_k^*\subset \{\mathbf{x}_i\}_{i=1}^n$, if $n^{2/\gamma}E(F_{\mathbf{x}},F_k^*)\to 0$ as $k,n\to\infty$, the prediction $\tilde{f}_c$ in \eqref{eq:pp} achieves an identical convergence rate of $\hat f_c$
\vspace{-15pt}
\begin{equation}
    V(\tilde{f}_c-f)=O_p\left\{n^{-(1-1/\gamma)}\right\}.
    \label{eq:tildefc}
    \vspace{-15pt}
\end{equation}
\label{thm:tildefc}
\end{Theorem}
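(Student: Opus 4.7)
The plan is to view $\hat f_c$ and $\tilde f_c$ through the KRR representation~\eqref{eq:krr} as penalized least-squares estimators on the nested subspaces $\mathcal{H}_{n,c}$ and $\mathcal{H}_{k,c}^*$ of $\mathcal{H}_c$ with common regularizer $\lambda = \tau^2/(2n)$. Using [R2] I would expand every function in the eigenbasis $\{\phi_r\}$, diagonalizing both $V(\cdot)$ and $J_c(\cdot)$; each risk then reduces to a spectral sum that [R3] controls. Condition [R4] supplies the moment bounds required to upgrade conditional-expectation estimates to $O_p$-bounds and to guarantee that the empirical covariance operator concentrates around its population counterpart.

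\textbf{Rate for $\hat f_c$.} I would split $V(\hat f_c - f) \le 2V(\hat f_c - f^\lambda) + 2V(f^\lambda - f)$, with $f^\lambda$ the population ridge minimizer. The deterministic term satisfies $V(f^\lambda - f) = O(\lambda) = O(1/n)$ under the source assumption $f \in \mathcal{H}_c$. The stochastic term is proportional, up to constants, to $\tau^2 d_{\text{eff}}(\lambda)/n$, where the effective dimension $d_{\text{eff}}(\lambda) = \sum_r 1/(1+\lambda\rho_r)$ behaves like $\lambda^{-1/\gamma} \asymp n^{1/\gamma}$ by [R3]; this yields a variance rate $n^{-(1-1/\gamma)}$ that dominates the bias whenever $\gamma > 1$. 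Combining gives $V(\hat f_c - f) = O_p\bigl(n^{-(1-1/\gamma)}\bigr)$.

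\textbf{Rate for $\tilde f_c$.} I would bound $V(\tilde f_c - \hat f_c)$ separately. Because both estimators minimize the same objective over nested subspaces, their difference is a Nystr\"om residual: a quadratic form in the residual kernel $c - \tilde c$ at the data, weighted by the resolvent applied to $\mathbf{y}$. Expanding in the eigenbasis, only the first $R \asymp d_{\text{eff}}(\lambda) \asymp n^{1/\gamma}$ eigendirections contribute to leading order, since higher modes are damped by the regularizer. For each pair $r, s \le R$, the spectral coefficient of $c - \tilde c$ is governed by the integral difference $\int \phi_r \phi_s \, \mathrm{d}(F_{\mathbf{x}} - F_k^*)$, which, by the Fourier representation of the energy distance together with the $(d+1)/2$-order smoothness in [R4], is bounded uniformly by a constant multiple of $E(F_{\mathbf{x}}, F_k^*)$. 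Summing over the $R^2$ pairs produces an extra contribution of order $n^{2/\gamma} E(F_{\mathbf{x}}, F_k^*)$, which by hypothesis is $o\bigl(n^{-(1-1/\gamma)}\bigr)$, establishing \eqref{eq:tildefc}.

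\textbf{Main obstacle.} The principal difficulty is this last passage, i.e., translating the scalar energy distance into uniform control on the first $\sim n^{2/\gamma}$ eigenbasis integrals $\int \phi_r \phi_s \, \mathrm{d}(F_{\mathbf{x}} - F_k^*)$. Two careful ingredients are needed: a sharp truncation argument showing that the spectral tail beyond $R$ contributes at lower order than $n^{-(1-1/\gamma)}$, which rests on [R3]; and a Sobolev-type embedding converting the $(d+1)/2$-differentiability of the products $\phi_r \phi_s$ into energy-distance control with constants independent of $r, s$. Executing both steps simultaneously, and in probability rather than merely in expectation, is the technical heart of the proof.
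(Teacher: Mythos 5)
Your overall architecture (triangle inequality $V(\tilde f_c-f)\le V(\tilde f_c-\hat f_c)+V(\hat f_c-f)$, a spectral bias--variance analysis with effective dimension $\lambda^{-1/\gamma}\asymp n^{1/\gamma}$ for the full-data term, and quadrature errors $\int\phi_r\phi_s\,\mathrm{d}(F_{\mathbf{x}}-F_k^*)$ for the low-rank term) matches the paper's, and your treatment of $V(\hat f_c-f)$ is essentially the paper's KRR-rate lemma. The gap is in the second half. First, the support-points generalization bound controls $\bigl|\int\eta\,\mathrm{d}(F_{\mathbf{x}}-F_k^*)\bigr|$ by a constant times $E(F_{\mathbf{x}},F_k^*)^{1/2}$, not $E(F_{\mathbf{x}},F_k^*)$: the energy distance is a squared MMD-type quantity, so quadrature error scales with its square root. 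Second, with the correct $E^{1/2}$ per pair, your ``sum over $R^2\asymp n^{2/\gamma}$ pairs'' gives $n^{2/\gamma}E^{1/2}$, which need not tend to zero under the hypothesis $n^{2/\gamma}E\to0$ (that only gives $E^{1/2}=o(n^{-1/\gamma})$); and even granting your per-pair bound of $E$, the hypothesis yields $n^{2/\gamma}E=o(1)$, not the $o\{n^{-(1-1/\gamma)}\}$ you assert. So the accounting does not close either way.

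The paper closes it differently, and the difference is the missing idea. It introduces the $J_c$-orthogonal projection $\tilde f_p$ of $\hat f_c$ onto $\mathcal{H}_{k,c}^*$; the residual $\eta=\hat f_c-\tilde f_p$ lies in $\mathcal{H}_c\ominus\mathcal{H}_{k,c}^*$ and therefore \emph{vanishes at every rep-point}, so $V(\eta)=\sum_{r,s}\eta_r\eta_s\int\phi_r\phi_s\,\mathrm{d}(F_{\mathbf{x}}-F_k^*)$ exactly. A Cauchy--Schwarz weighted by $1+\rho_r\tau^2/(2n)$ then factorizes the double sum, giving $V(\eta)\le O\bigl\{(n^{2/\gamma}E)^{1/2}\bigr\}\,(V+\tfrac{\tau^2}{2n}J_c)(\eta)$: the energy distance enters as a \emph{multiplicative} $o_p(1)$ factor relative to the penalized norm of $\eta$, yielding $V(\eta)=o_p\{\tfrac{\tau^2}{2n}J_c(\eta)\}$ --- not as an additive error that must be smaller than $n^{-(1-1/\gamma)}$. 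This relative-smallness statement, combined with the normal-equation identities coming from the optimality of $\hat f_c$ over $\mathcal{H}_c$ and of $\tilde f_c$ over $\mathcal{H}_{k,c}^*$, is what delivers $V(\tilde f_c-\hat f_c)\preceq O_p\{n^{-(1-1/\gamma)}\}$ under the stated, weaker hypothesis. To repair your proof you would need to replace the pairwise summation by this weighted factorization and route the argument through a function that vanishes at the knots.
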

\vspace{-1cm}
Although Theorem~\ref{thm:tildefc} limits $D_k^*$ to a subset of $\{\mathbf{x}_i\}_{i=1}^n$, it guides the choice of the rep-points. The convergence of $\tilde{f}_c$ to $\hat f_c$ versus $k$ requires $E(F_{\mathbf{x}},F_k^*)=o_p(n^{-2/\gamma})$. This requirement indicates that the empirical distribution of rep-points $F_k^*$ should approach $F_{\mathbf{x}}$. Moreover, the faster the decrease in $E(F_{\mathbf{x}},F_k^*)$ with an increase in $k$, the earlier is the convergence of $\tilde{f}_c$ to $\hat f_c$ achieved. The abovementioned discussion inspired us to select $D_k^*$ that minimizes $E(F_{\mathbf{x}},F_k^*)$; this choice agrees with the definition of support points \citep[SPs;][]{Supportpoints}. The exact expression of $F_{\mathbf{x}}$ may be unknown in practice, and we can use the empirical distribution of $\{\mathbf{x}_i\}_{i=1}^n$, denoted as $F_n$, to replace $F_{\mathbf{x}}$. Using the R package \textit{support} \citep{supportRpackage}, we can easily obtain SPs. Note that SPs are generated rather than selected from $\{\mathbf{x}_i\}_{i=1}^n$, indicating that SPs may not belong to $\{\mathbf{x}_i\}_{i=1}^n$. 

Figures~\ref{fig:SP_Demo} and S2 show SPs, $k$-DPPs (obtained by an \href{https://github.com/camilacasquilho/k-dpp}{exact algorithm} based on \cite{kulesza2012determinantal}), random subsamples (Rands), and grid points (Grids) for multiple location sets $\{\mathbf{x}_i\}_{i=1}^n$. A comparison of Fig.~\ref{fig:subfig:SP_Unif} with \ref{fig:subfig:Rand_Unif} and~S2(a) demonstrates that for uniformly distributed $\{\mathbf{x}_i\}_{i=1}^n$, SPs are evenly spaced and cover the whole region well. A comparison of Figs.~\ref{fig:subfig:SP_NUnif} with \ref{fig:subfig:Grid_NUnif} demonstrates that for nonuniformly distributed $\{\mathbf{x}_i\}_{i=1}^n$, more SPs gather at the data-dense region, and sufficient SPs capture the data-sparse region as well. SPs have a space-filling property and mimic the location set of the full data efficiently, thereby maximizing the representativeness of each point \citep{Supportpoints}. This is another reason for terming $D_k^*$ as rep-points. 
\begin{figure}[h]
\centering
\subfigure[SPs]{
\label{fig:subfig:SP_Unif} 
\includegraphics[scale=0.35]{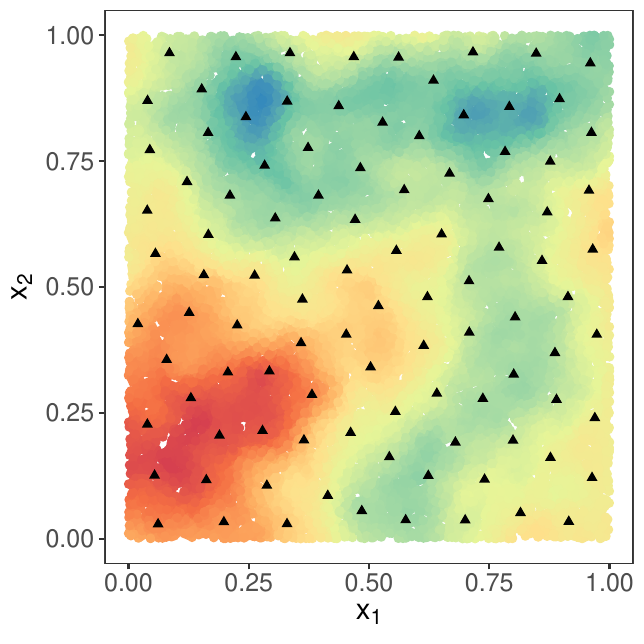}}
\subfigure[Rands]{
\label{fig:subfig:Rand_Unif} 
\includegraphics[scale=0.35]{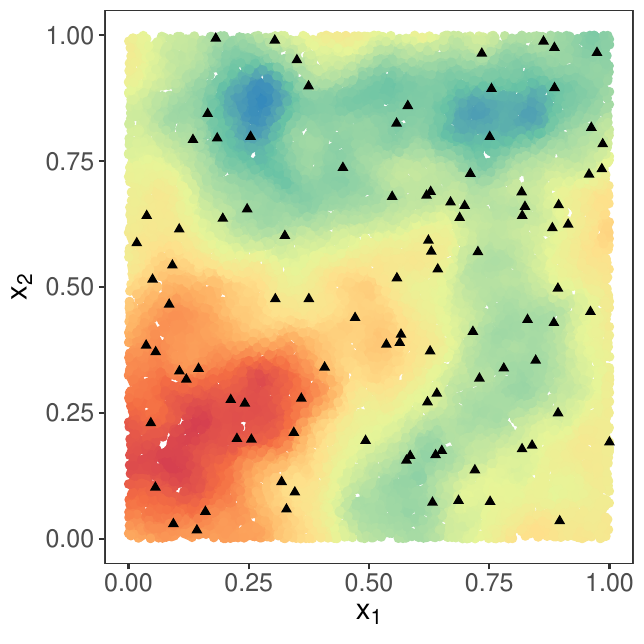}}
\subfigure[SPs]{
\label{fig:subfig:SP_NUnif} 
\includegraphics[scale=0.35]{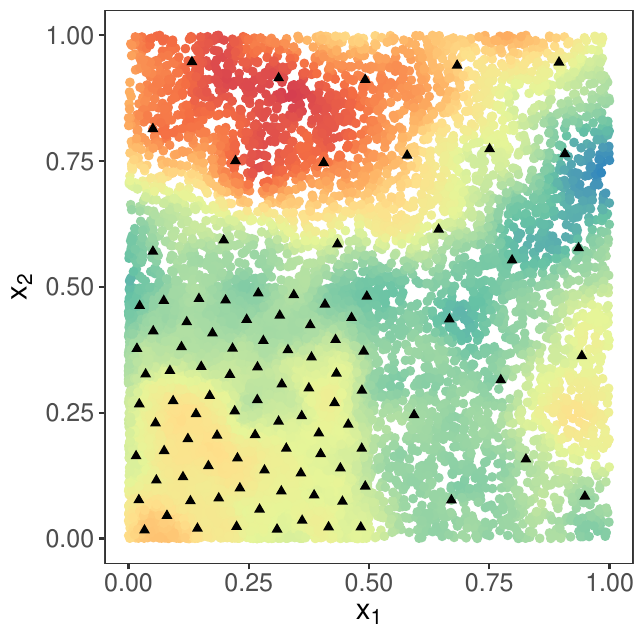}}
\subfigure[Grids]{
\label{fig:subfig:Grid_NUnif} 
\includegraphics[scale=0.35]{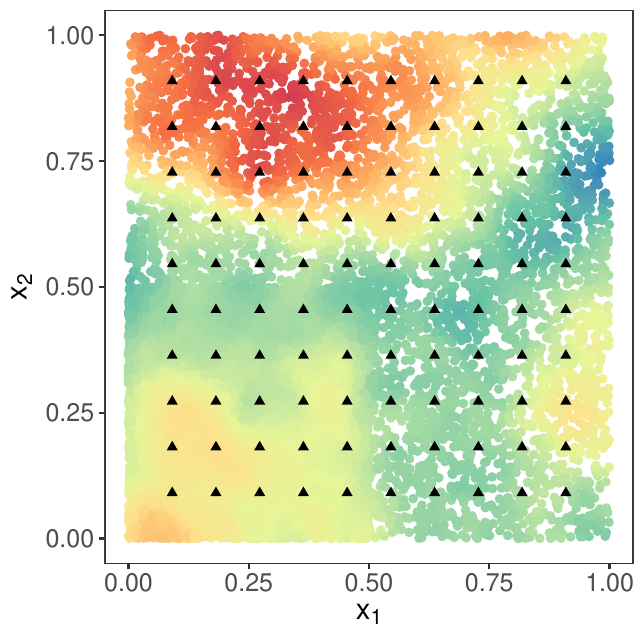}}
\caption{Rep-points for spatial data with $n=20000$ and multiple location sets. In (a) and (b), the location sets $\{\mathbf{x}_i\}_{i=1}^{n}$ follow a uniform distribution on $[0,1]^2$. In (c) and (d), $75\%$ of $\{\mathbf{x}_i\}_{i=1}^n$ follow a uniform distribution on $[0,0.5]^2$, and $25\%$ of $\{\mathbf{x}_i\}_{i=1}^n$ are uniformly distributed over the remaining region.}
\vspace{-10pt}
\label{fig:SP_Demo} 
\end{figure}

Finally, we discuss the order of $k$ required for achieving the convergence of $\tilde{f}_c$. \citet{Supportpoints} reported the relationship between the energy distance of SPs and $k$. For a bounded Borel set $\mathcal{X}$ with a nonempty interior,  $E(F_{\mathbf{x}},F_k^*)=O\{k^{-1}(\log k)^{-(1-\omega)/(2d)}\}$, where $\omega$ is in $(0,1)$. We consider $E(F_{\mathbf{x}},F_k^*)=O(k^{-1})$ as $k^{-1}$ dominates $(\log k)^{-(1-\omega)/(2d)}$. Then, to satisfy the condition of convergence, i.e., $E(F_{\mathbf{x}},F_k^*)=o_p(n^{-2/\gamma})$, we require that $k$ should have a higher order than $O(n^{2/\gamma})$. We can see that the larger the $\gamma$, the smoother is the process, and the smaller is the order of $k$ needed to ensure the convergence of $\tilde{f}_c$. 

Note that \citet{burt2019rates} and \citet{burt2020convergence} also derived the order of $k$ needed to make the KL divergence between the approximated model and the exact posterior to be arbitrarily small when inducing points are obtained via an approximated $k$-DPP and RLS. Their results share some common features with ours, although obtained via different theoretical frameworks. First, both derivations depend on the eigen decomposition of the covariance $c$ with  distribution $F_{\mathbf{x}}$. Therefore, the derived order of $k$ is determined by the smoothness of the process and the characteristics of $F_{\mathbf{x}}$. Second, both results suggest well dispersed rep-points over the domain $\mathcal{X}$. The $k$-DPP induces strong negative coefficients between the selected points. The support points suggested by us are space-filling. A detailed comparison of their performance as rep-points is presented in Section~\ref{sec:subsec:testrep}, highlighting the efficiency of SPs.


\subsection{Estimated covariance function}
Usually, the covariance $c$ is unknown and is replaced by an estimated one, i.e., $\hat c$. This section examines the influence of the estimated covariance $\hat c$ on the convergence rate of the original process and the conditions on rep-points for the predictive process such that its convergence can achieve the same rate. Similar to the studies by \citet{krigbound2} and \citet{krigbound3}, we demonstrate the undersmoothed and oversmoothed cases, where $\hat \gamma$ is smaller and larger than $\gamma$, respectively. Notation $a_n\preceq b_n$ for two positive sequences $a_n$ and $b_n$ indicates that there is a $B>0$ such that $a_n\le Bb_n$.

\begin{Theorem}(Convergence rate of $\tilde{f}_{\hat c}$) Assume that the regularity conditions [R1]--[R3] hold for $\hat c$ and $\hat \gamma, \gamma>2$. As $n\to\infty$, $V(\hat f_{\hat c}-f)=O_p\left\{n^{-\{\min(\hat\gamma,\gamma)-1\}/\hat\gamma}\right\}$. Furthermore, assume that $J_{\hat c}(f)\preceq O_p\{n^{-(\gamma-1-\hat\gamma)/\hat\gamma}\}$ when $\hat\gamma>\gamma-1$. Then, for $D_k^*\subset\{\mathbf{x}_i\}_{i=1}^n$, if $n^{2/\hat\gamma}E(F_{\mathbf{x}},F_k^*)\to 0$ as $k,n\to\infty$, the prediction $\tilde{f}_{\hat c}$ achieves a convergence rate identical to that achieved by $\hat f_{\hat c}$
\vspace{-15pt}
\begin{equation}
    V(\tilde{f}_{\hat c}-f)=O_p\left\{n^{-\frac{\min(\hat\gamma,\gamma)-1}{\hat\gamma}}\right\}.
    \label{eq:mis}
    \vspace{-15pt}
\end{equation}
\label{thm:mis}
\end{Theorem}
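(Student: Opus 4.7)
The plan is to mirror the analysis of Theorem~\ref{thm:tildefc} but carry the mis-specification through the RKHS machinery induced by $\hat c$ rather than by $c$. I would first write a triangle-type decomposition
\begin{equation*}
V(\tilde f_{\hat c}-f)\le 2V(\tilde f_{\hat c}-\hat f_{\hat c})+2V(\hat f_{\hat c}-f),
\end{equation*}
which reduces the problem to (i) bounding the full-data KRR error with a mis-specified kernel, and (ii) bounding the rep-point approximation error inside that mis-specified RKHS.

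For (i), I would use Wahba's representer theorem to write $\hat f_{\hat c}$ as the minimizer of \eqref{eq:krr} with $c$ replaced by $\hat c$, and then expand both $\hat f_{\hat c}$ and $f$ in the eigenbasis of $\hat c$ guaranteed by [R2]. This produces the standard bias-plus-variance expression, where the variance piece (driven by the $\epsilon_i$'s) evaluates under [R3] for $\hat c$ to $O_p(n^{-(\hat\gamma-1)/\hat\gamma})$, and the bias is controlled through the norm $J_{\hat c}(f)$. When $\hat\gamma\le \gamma$ the model is undersmoothed relative to the truth; the bias is of smaller order than the variance and the rate $n^{-(\hat\gamma-1)/\hat\gamma}=n^{-(\min(\hat\gamma,\gamma)-1)/\hat\gamma}$ emerges directly. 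When $\hat\gamma>\gamma$ the model is oversmoothed and $J_{\hat c}(f)$ can diverge; here the hypothesis $J_{\hat c}(f)\preceq O_p\{n^{-(\gamma-1-\hat\gamma)/\hat\gamma}\}$ is exactly what is needed to cap the bias contribution at $n^{-(\gamma-1)/\hat\gamma}$, and since $\gamma-1<\hat\gamma-1$ the bias dominates, yielding the advertised $n^{-(\min(\hat\gamma,\gamma)-1)/\hat\gamma}$.

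For (ii), I would replay the rep-point argument of Theorem~\ref{thm:tildefc} verbatim with $\hat c$, $\hat\gamma$, $\mathcal{H}_{\hat c}$ and $\mathcal{H}^{*}_{k,\hat c}$ in place of their unhatted counterparts. The output is a bound of the form $V(\tilde f_{\hat c}-\hat f_{\hat c})\lesssim E(F_{\mathbf{x}},F_k^*)\cdot n^{2/\hat\gamma}\cdot n^{-(\hat\gamma-1)/\hat\gamma}$, so the assumed scaling $n^{2/\hat\gamma}E(F_{\mathbf{x}},F_k^*)\to 0$ makes this term of strictly smaller order than the rate already achieved by $\hat f_{\hat c}$, and equation~\eqref{eq:mis} follows by combining with (i).

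The main obstacle will be the oversmoothed regime in step (i). Under Theorem~\ref{thm:tildefc} we had $J_c(f)=O_p(1)$ for free, but with mis-specified $\hat c$ the target function may lie outside $\mathcal{H}_{\hat c}$; the delicate step is splitting the eigen-series at the data-driven cutoff where $\tau^2\hat\rho_r/n\approx 1$ and using the assumption on $J_{\hat c}(f)$ together with the eigenvalue growth rates of both $c$ and $\hat c$ to balance the tail of the bias sum against the variance. Making the two exponents cancel to exactly $(\min(\hat\gamma,\gamma)-1)/\hat\gamma$, rather than a coarser bound, is where the bookkeeping is tightest.
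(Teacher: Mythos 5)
Your overall architecture matches the paper's: the same triangle decomposition into the full-data error $V(\hat f_{\hat c}-f)$ and the rep-point error $V(\tilde f_{\hat c}-\hat f_{\hat c})$, an eigen-expansion under $\hat c$ for the former, and a rerun of the Theorem~\ref{thm:tildefc} projection argument for the latter, with the condition $n^{2/\hat\gamma}E(F_{\mathbf{x}},F_k^*)\to 0$ entering exactly as you describe (it forces $V(\eta)=o_p\{\tau^2 J_{\hat c}(\eta)/(2n)\}$ on $\mathcal{H}_{\hat c}\ominus\mathcal{H}_{k,\hat c}^*$, which closes a self-bounding inequality showing the rep-point error is of the \emph{same} order as the full-data error; your claim that it is of strictly smaller order is stronger than needed but consistent). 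The genuine gap is in step (i). You assert that in the well-specified case ``$J_c(f)=O_p(1)$ for free'' and accordingly split cases at $\hat\gamma\lessgtr\gamma$ with the bias read off as $\frac{\tau^2}{2n}J_{\hat c}(f)$. But $f$ is a sample path of $Z$, and sample paths almost surely do \emph{not} lie in the RKHS of their own covariance, so $J_c(f)=\infty$ a.s. The paper's resolution --- and the actual source of the ``$-1$'' in $\min(\hat\gamma,\gamma)-1$ --- is to introduce the complexity $\gamma_f$ of the path itself, argue via spectral and sample-path regularity results that $\gamma_f\approx\gamma-1$ with $f\in\mathcal{H}_{c_f}$, and split cases at $\hat\gamma\lessgtr\gamma_f$. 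Without this, your undersmoothed case is unjustified on the range $\gamma-1<\hat\gamma\le\gamma$, where $f$ need not lie in $\mathcal{H}_{\hat c}$ at all; this is precisely why the theorem imposes the hypothesis on $J_{\hat c}(f)$ already for $\hat\gamma>\gamma-1$ rather than only for $\hat\gamma>\gamma$, a distinction your case split erases.

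Relatedly, in the oversmoothed regime the bound ``bias $\preceq\frac{\tau^2}{2n}J_{\hat c}(f)$'' is not a usable starting point when $J_{\hat c}(f)$ is large or infinite. The paper instead passes through the population minimizer $f^*$ of $V(f-\eta)+\frac{\tau^2}{2n}J_{\hat c}(\eta)$, proves $V(f-f^*)+\frac{\tau^2}{2n}J_{\hat c}(f^*)\preceq\{\tau^2/(2n)\}^{\gamma_f/\hat\gamma}J_{c_f}(f)$ (Lemma~5 of the Supplementary Material), and then runs a multi-case oracle/self-bounding argument (Lemma~6) to get $V(\hat f_{\hat c}-f)\preceq O_p\{n^{-\gamma_f/\hat\gamma}+n^{-(1-1/\hat\gamma)}\}$; the exponent $(\gamma-1)/\hat\gamma$ comes from this approximation-theoretic bound with $\gamma_f\approx\gamma-1$, not from capping $\frac{\tau^2}{2n}J_{\hat c}(f)$. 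The stated hypothesis $J_{\hat c}(f)\preceq O_p\{n^{-(\gamma-1-\hat\gamma)/\hat\gamma}\}$ is used later, when assembling the two pieces, to control the penalty terms appearing in the $\bigl(V+\frac{\tau^2}{2n}J_{\hat c}\bigr)$-norm of $\tilde f_{\hat c}-\hat f_{\hat c}$. That your bookkeeping lands on the correct exponent is a consequence of how the hypothesis was reverse-engineered, not evidence that the naive bias--variance split constitutes a proof; the sample-path complexity argument and the $f^*$ construction are the missing ingredients.
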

\vspace{-30pt}

Theorem~\ref{thm:mis} shows the convergence rates of both $\hat f_{\hat c}$ and $\tilde{f}_{\hat c}$. For the undersmoothed case, i.e., $\hat\gamma<\gamma$, the convergence rate is $O_p\{n^{-(1-1/\hat\gamma)}\}$. For the oversmoothed case, i.e., $\hat\gamma>\gamma$, the convergence rate becomes $O_p\{n^{-(\gamma-1)/\hat\gamma}\}$. The closer $\hat\gamma$ is to $\gamma$, the faster is the convergence. When $\hat\gamma=\gamma$, Theorem~\ref{thm:mis} becomes Theorem~\ref{thm:tildefc}. Even in the case of an estimated covariance, the convergence rate of $\tilde{f}_{\hat c}$ can be identical to that of $\hat f_{\hat c}$ with sufficient rep-points. To achieve convergence, we require $E(F_{\mathbf{x}},F_k^*)=o_p\{n^{-2/\hat\gamma}\}$, indicating that $\tilde{f}_{\hat c}$ in the undersmoothed case requires additional rep-points than $\tilde{f}_c$ to achieve convergence; in the oversmoothed case,  $\tilde{f}_{\hat c}$ achieves convergence faster. 

With an estimated covariance, the following theorem suggests the adjustment of $\tau^2$ to improve the convergence rates of $\hat f_{\hat c}$ and $\tilde{f}_{\hat c}$ when $k$ is given and large enough.
\begin{Theorem}(Adjusted $\tau^2$) Assume that the regularity conditions [R1]--[R3] hold for $\hat c$ and $\hat \gamma,\gamma>2$. If $\tau^2=O(n^{-\hat\gamma/\gamma+1})$, the prediction based on full data achieves the optimal convergence rate, i.e.,  $V(\hat f_{\hat c}-f)=O_p\left\{n^{-(1-1/\gamma)}\right\}$ as $n\to\infty$. Furthermore, assume that $J_{\hat c}(f)\preceq O_p\{n^{(\gamma-1-\hat\gamma)/\gamma}\}$ when $\hat\gamma>\gamma-1$. Then, for  $D_k^*\subset\{\mathbf{x}_i\}_{i=1}^n$, if  $n^{2/\hat\gamma}E(F_{\mathbf{x}},F_k^*)\to 0$ as $k,n\to\infty$, the prediction based on the rep-points achieves the optimal convergence rate
\vspace{-20pt}
\begin{equation*}
   V(\hat f_{\hat c}-f)=O_p\left\{n^{-(1-1/\gamma)}\right\}.
\vspace{-20pt}
\end{equation*}
\label{thm:adjuestedtau}
\end{Theorem}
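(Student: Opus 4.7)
The plan is to re-run the bias--variance decomposition underlying Theorems 1 and 2 on the KRR formulation \eqref{eq:krr} with the adjusted ridge parameter $\lambda = \tau^2/(2n) \asymp n^{-\hat\gamma/\gamma}$, and then to check that this specific choice rebalances the two error pieces so that each is $O_p(n^{-(1-1/\gamma)})$. Starting from the spectral decomposition of $\hat c$ supplied by [R2]--[R3], I would expand $\hat f_{\hat c}$ and $f$ in the eigenbasis $\{\hat\phi_r\}$ of $\hat c$ and apply the usual shrinkage identity to split $V(\hat f_{\hat c}-f)$ into a variance term and a bias term, exactly as in the proof of Theorem~\ref{thm:mis}.

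For the variance, [R3] applied to $\hat c$ yields $\hat\rho_r \gtrsim r^{\hat\gamma}$, so the effective dimension satisfies $\mathcal{N}(\lambda)=\sum_r(1+\lambda\hat\rho_r)^{-1} \asymp \lambda^{-1/\hat\gamma}$; with the adjusted $\lambda \asymp n^{-\hat\gamma/\gamma}$ this becomes $n^{1/\gamma}$, and the variance contribution reduces to $\mathcal{N}(\lambda)/n \asymp n^{-(1-1/\gamma)}$. For the bias, the elementary inequality $\mathrm{Bias}^2 \le \lambda J_{\hat c}(f)$ combined with the imposed source condition $J_{\hat c}(f) \preceq O_p\{n^{(\gamma-1-\hat\gamma)/\gamma}\}$ (only required when $\hat\gamma>\gamma-1$) gives exactly the same order. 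In the complementary regime $\hat\gamma \le \gamma-1$ no source condition is needed because the bias is automatically dominated by the variance through the eigenvalue decay, arguing as at the corresponding step of Theorem~\ref{thm:mis}. Combining the two pieces yields the claimed optimal rate for $\hat f_{\hat c}$.

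For the predictive process I would use the triangle-type bound $V(\tilde f_{\hat c}-f) \le 2V(\tilde f_{\hat c}-\hat f_{\hat c})+2V(\hat f_{\hat c}-f)$ and re-invoke the low-rank-perturbation estimate from the proof of Theorem~\ref{thm:mis} on the first summand. That estimate controls the Nystr\"om-type projection error by $\lambda^{-1}E(F_\mathbf{x},F_k^*)$ up to eigenvalue-dependent constants; after inserting the adjusted $\lambda$, the hypothesis $n^{2/\hat\gamma}E(F_\mathbf{x},F_k^*)\to 0$ is precisely what is needed to push this term to $o_p(n^{-(1-1/\gamma)})$, matching the contribution from the full-data estimator and delivering the stated rate for $\tilde f_{\hat c}$.

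The step most likely to cause difficulty is the bias analysis in the oversmoothed regime $\hat\gamma>\gamma$. The exponent $(\gamma-1-\hat\gamma)/\gamma$ in the source condition has been tuned exactly to absorb the change in $\lambda$ relative to Theorem~\ref{thm:mis}, so the elementary bound $\mathrm{Bias}^2\le \lambda J_{\hat c}(f)$ must remain sharp; any looseness here, or any loss incurred when transferring from $\hat f_{\hat c}$ to $\tilde f_{\hat c}$ under the weaker energy-distance threshold, would immediately break the match between the two rates and spoil the conclusion.
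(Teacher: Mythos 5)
Your proposal takes essentially the same route as the paper: the paper's entire proof of this theorem is the single sentence ``follows that of Theorem~\ref{thm:mis} but takes $\tau^2=O(n^{-\hat\gamma/\gamma+1})$,'' and your plan --- rerun the Theorem~\ref{thm:mis} decomposition with $\lambda=\tau^2/(2n)\asymp n^{-\hat\gamma/\gamma}$, check that the effective dimension gives variance $n^{-1}\lambda^{-1/\hat\gamma}\asymp n^{-(1-1/\gamma)}$, bound the bias by $\lambda J_{\hat c}(f)$ using the stated source condition, and recycle the Nystr\"{o}m/projection bound for $\tilde f_{\hat c}$ --- is exactly that, carried out explicitly. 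Your exponent bookkeeping for the variance term and for the bias in the regime $\hat\gamma>\gamma-1$ is correct.

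One assertion does not survive the arithmetic: you claim that in the complementary regime $\hat\gamma\le\gamma-1$ ``no source condition is needed because the bias is automatically dominated by the variance.'' With the inflated $\lambda\asymp n^{-\hat\gamma/\gamma}$ and only $J_{\hat c}(f)<\infty$, the bias bound is $\lambda J_{\hat c}(f)\asymp n^{-\hat\gamma/\gamma}$, while the variance is $n^{-(\gamma-1)/\gamma}$; domination requires $\hat\gamma/\gamma\ge(\gamma-1)/\gamma$, i.e.\ $\hat\gamma\ge\gamma-1$, so for $\hat\gamma<\gamma-1$ strictly the bias term is of \emph{larger} order than the claimed rate and the argument as written does not close. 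To be fair, the paper's one-line proof glosses over exactly the same regime (Lemma~3 of the Supplementary Material gives the bound $O_p\{n^{-1}\lambda^{-1/\hat\gamma}+\lambda\}$, whose second term exhibits the same problem), so this is a shared gap rather than a defect unique to your writeup; but since you explicitly asserted the domination, you should either restrict to $\hat\gamma\ge\gamma-1$ in that branch or supply an additional condition on $J_{\hat c}(f)$ there.
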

\vspace{-35pt}

Theorem~\ref{thm:adjuestedtau} demonstrates that even with a mis-specified covariance, $\tilde{f}_{\hat c}$ and $\hat f_{\hat c}$ can converge with the same rate as $\tilde{f}_{c}$ and $\hat f_{c}$, respectively, by properly adjusting the order of $\tau^2$. For the undersmoothed case, the adjusted $\tau^2$ has a higher order than $O(1)$, and vice versa. The larger the $|\hat \gamma-\gamma|$, the larger is the required adjustment of $\tau^2$. For $\hat\gamma=\gamma$, we do not require to adjust $\tau^2$ and Theorem~\ref{thm:adjuestedtau} becomes Theorem~\ref{thm:tildefc}. Although we can never adjust $\tau^2$ to be of the exact order desired, Theorem~\ref{thm:adjuestedtau} provides certain guidance. For the example of the undersmoothed case, we can make $\tau^2$ larger than the initial value to improve the prediction performance of $\tilde{f}_{\hat c}$ and $\hat f_{\hat c}$.

\section{Simulations}
\label{sec:simulation}
This section uses simulated data to show the properties of the smoothness parameter $\gamma$, demonstrate the advantages of low-rank approximation with SPs under various scenarios, and confirm our theoretical results. The computations were implemented by R 3.6.3 \citep{Rcite} on a machine with Intel(R) Xeon(R) CPU E5-2680 v4 @ 2.40GHz and 125 GB RAM.

\subsection{Smoothness parameter $\gamma$}
\label{sec:subsec:gammavalues}
We use the Mat\'{e}rn covariance function as an example to provide an intuitive demonstration of the smoothness parameter $\gamma$. Assume that the process is stationary and $c$ has the following form:
\vspace{-25pt}
\begin{equation}
    c(\mathbf{x}_i,\mathbf{x}_j)=\frac{\sigma^2}{\Gamma(\nu)2^{\nu-1}}\left(\frac{\|\mathbf{x}_i-\mathbf{x}_j\|_2}{\psi}\right)^{\nu}\mathcal{K}_{\nu}\left (\frac{\|\mathbf{x}_i-\mathbf{x}_j\|_2}{\psi}\right),
    \label{eq:matern}
    \vspace{-10pt}
\end{equation}
where $\sigma^2>0$ is the marginal variance such that $c(\mathbf{x},\mathbf{x})=\sigma^2$, $\psi>0$ is called the range parameter; $\nu>0$ determines the smoothness of $f(\mathbf{x}_i)$ (or $Z$) and is termed the smoothness parameter in Mat\'{e}rn; $\mathcal{K}_{\nu}$ is the modified Bessel function of the second type of order $\nu$. 
\begin{figure}[b!]
\centering
\subfigure[]{
\label{fig:subfig:Gamma_nu} 
\includegraphics[scale=0.45]{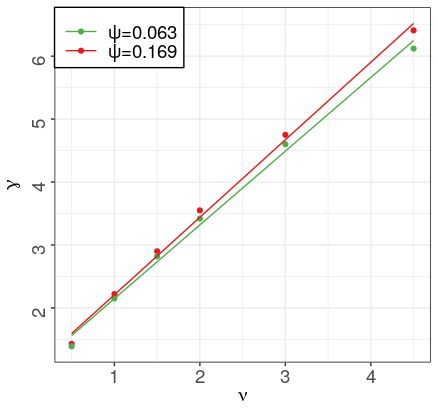}}
\hspace{5mm}
\subfigure[]{
\label{fig:subfig:Gamma_phi} 
\includegraphics[scale=0.45]{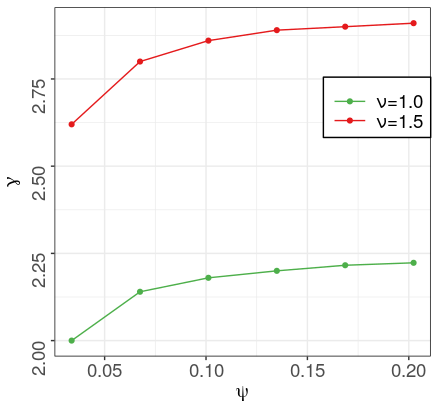}}
\caption{Values of the smoothness parameter $\gamma$ for the Mat\'{e}rn covariance function  \eqref{eq:matern} under various parameter settings: (a) weak and strong correlations; (b) rough and smooth realizations.}
\label{fig:Gammavalues} 
\vskip -10pt
\end{figure}

For obtaining an empirical estimate of $\gamma$, we calculate the eigenvalues of an $n_0\times n_0$ matrix $\mathbf{C}_0=\{c(\mathbf{x}_i,\mathbf{x}_j)\}_{i,j=1}^{n_0}$, i.e., $\lambda_{01}, \ldots, \lambda_{0n_0}$, where $\{\mathbf{x}_i\}_{i=1}^{n_0}$ are uniformly distributed on $[0,1]^2$. As $\lambda_{0i}^{-1}=O(i^{\gamma})$, we fit $\{(\log i,-\log \lambda_{0i})\}_{i=1}^{n_0}$ with a linear regression and obtain the slope as an estimate of $\gamma$. Fig.~\ref{fig:Gammavalues} shows how $\gamma$ changes with the parameters in the Mat\'{e}rn covariance, where $\sigma^2$ is fixed at $1.5$. In Fig.~\ref{fig:subfig:Gamma_nu}, we set $\psi$ to be $0.063$ and $0.169$ for covering the cases of weak and strong correlations, respectively. Based on other parameters, the smoothness parameter $\gamma$ linearly grows with the parameter $\nu$; this result is in agreement with the discussion reported by \citet{krigbound3} that $\gamma$ is about $2(\nu+d/2)/d$. In Fig.~\ref{fig:subfig:Gamma_phi}, $\gamma$ increases with $\psi$, though not linearly. Moreover, although $\psi$ varies within a large range, $\gamma$ varies within $0.5$. Compared with $\psi$, $\gamma$ is more sensitive to the value of $\nu$.

Two Mat\'{e}rn covariances with consistent parameters $(\sigma_1^2,\psi_1,\nu)$ and $(\sigma_2^2,\psi_2,\nu)$ satisfying $\sigma_1^2/\psi_1^{2\nu}=\sigma_2^2/\psi_2^{2\nu}$ define two equivalent probability measures and thus obtain asymptotically equal predictions \citep{zhang2004}. Our simulation and theoretical results taken together demonstrate that the prediction performance is determined by the $\gamma$ value, and two consistent parameter settings should have an asymptotically equal $\gamma$. Table~\ref{tab:GammaConsistent} lists the average values of $\gamma$ among ten replicates under three consistent parameter settings. The average values get closer as $n_0$ increases. This result supports our claim that $\gamma$ is a crucial measure of the smoothness of the process. Further validation is provided through an example utilizing another covariance, as demonstrated in Section S3 of the Supplementary Materials.

\begin{table}[h]
\centering
\caption{Values of $\gamma$ under consistent parameter settings obtained from the data of increasing size $n_0$ of the Mat\'{e}rn covariance function \eqref{eq:matern}.}
\vspace{5pt}
{
	\begin{tabular}{l|cccc}
			\toprule
			$(\sigma^2,\psi,\nu)$ & $n_0=2000$ & $4000$ & $6000$ & $8000$ \\ 
			\midrule
			{$(1.5,0.169,1.5)$} & $2.8899$ & $2.9011$ & $2.9064$ & $2.9026$ \\ \hline
			{$(1.0,0.147,1.5)$} & $2.8666$ & $2.8865$ & $2.8954$ & $2.8937$ \\ \hline
			{$(2.0,0.186,1.5)$} & $2.9045$ & $2.9102$ & $2.9132$ & $2.9082$ \\ 
			\bottomrule	
	\end{tabular}}
 \vspace{-10pt}
\label{tab:GammaConsistent}
\end{table}

\subsection{Performance of various rep-points choices}
\label{sec:subsec:testrep}
The performance of various rep-points choices under the following four scenarios were tested and compared: 
\begin{enumerate}
    \vspace{-5pt}
    \item \textbf{Strong correlation}: $Z\sim\mathrm{GP}(0,c_s)$, where $c_s$ belongs to the Mat\'{e}rn class in \eqref{eq:matern} with $\bm{\theta}_s=(1.5, 0.169, 1.5)^\top$. The effective range of $Z$ is $0.8$. Here $F_{\mathbf{x}}$ is the uniform distribution on $[0,1]^2$, i.e., $U_{[0, 1]^2}$. Four choices of rep-points (SPs, $k$-DPPs, Grids and Rands) of size $k=(0.10, 0.18, 0.28, 0.40, 0.55, 0.81, 1.12, 1.36)\times n^{2/2.9}\approx (36, 64, 100, 144, 196,$ $289, 400, 484)$ were selected based on the full data, where $2.9$ is the estimated $\gamma$ for $c_s$ obtained from Table~\ref{tab:GammaConsistent}. Two mis-specified covariances with $\hat{\bm{\theta}}_s=(1.5, 0.169, 1.0)^\top$ and $(1.5,0.169, 3.0)^\top$ were imposed as the undersmoothed and oversmoothed cases, respectively.
    \vspace{-5pt} 
    \item \textbf{Weak correlation}: $Z\sim\mathrm{GP}(0,c_w)$, where $c_w$ belongs to the Mat\'{e}rn class in \eqref{eq:matern} with $\bm{\theta}_w=(1.5, 0.063, 1.5)^\top$. The effective range of $Z$ is $0.3$, and $F_{\mathbf{x}}=U_{[0, 1]^2}$. SPs, $k$-DPPs, Grids and Rands of sizes $k=(0.66, 1.21, 1.66, 2.05, 2.63, 3.12, 3.64, 4.02)\times n^{2/2.8}\approx (289, 529, 729, 900, 1156$, $1369, 1600, 1764)$ are chosen, where $2.8$ is the estimate of $\gamma$ for $c_w$ obtained from Fig.~\ref{fig:subfig:Gamma_nu}. Two mis-specified covariances with $\hat{\bm{\theta}}_w=(1.5, 0.063, 1.0)^\top$ and $(1.5,0.063, 3.0)^\top$ were imposed as undersmoothed and oversmoothed cases, respectively.
    \vspace{-5pt} 
    \item \textbf{Consistency}: The settings are the same as those in the first scenario, although parameters of mis-specifed covariances are substituted as $\hat{\bm{\theta}}_{s}=(1.0, 0.147, 1.5)^\top$ and $(2.0, 0.186, 1.5)^\top$ to be consistent with the parameters of $c_s$ as discussed in  Section~\ref{sec:subsec:gammavalues}. 
    \vspace{-30pt} 
    \item \textbf{Nonuniform $F_{\mathbf{x}}$}: $Z\sim\mathrm{GP}(0,c_s)$. The covariance $c_s$ is the same as in the first scenario. Here, we have $75\%$ of $\{\mathbf{x}_i\}_{i=1}^n$ uniformly distributed in $[0.0,0.5]^2$ and the other $25\%$ uniformly distributed in the remaining region of $[0,1]^2$, as described in Fig.~\ref{fig:subfig:SP_NUnif} and \ref{fig:subfig:Grid_NUnif}. Five choices of rep-points were considered, i.e., Grids, $k$-DPPs, Rands and SPs generated by $\{\mathbf{x}_i\}_{i=1}^n$ (SPs) and by $U_{[0, 1]^2}$ (SPUs). The size $k$ was the same as that in the first scenario.
\end{enumerate}
For each scenario, we generate $\{(\mathbf{x}_i,f_i)\}_{i=1}^{n+n_t}$, where $f_i=f(\mathbf{x}_i)$'s are the realizations of the GP $Z$ and $n=n_t=5000$. Then, we randomly divide the data into two parts of sizes $n$ and $n_t$. Next, we perturb the training data with random noises from $\mathcal{N}(0,0.27)$. Finally, we evaluate $f(\mathbf{x})$ on the remaining $n_t$ locations $\{\mathbf{x}_{ti}\}_{i=1}^{n_t}$ based on the rep-points or full data. The performance of each method is evaluated in terms of the root mean squared prediction error (RMSPE), i.e., $[n_t^{-1}\sum_{i=1}^{n_t}\{\tilde{f}_{\hat c}(\mathbf{x}_{ti})-f(\mathbf{x}_{ti})\}^2]^{1/2}$. We show the average of RMSPEs over 100 replicates of the abovementioned process in Fig.~\ref{fig:VersusK}. 

In Figs.~\ref{fig:subfig:Set1} and \ref{fig:subfig:Set2}, the RMSPEs for $\tilde{f}_c$ and $\tilde{f}_{\hat c}$ decrease at first and then converge to the levels of $\hat{f}_c$ and $\hat{f}_{\hat c}$, respectively, as $k$ increases, regardless of the types of rep-points used. This result agrees with our theoretical results that $\tilde{f}_{\hat c}$ and $\hat f_{\hat c}$ may achieve the same convergence rate if $D_k^*$ (or rep-points) approximate $F_{\mathbf{x}}$ well enough and validates the efficiency of the rep-points. The solid lines representing SPs exhibit the fastest decline, regardless of whether the imposed/estimated covariance is correct or not. This result highlights the superior efficiency of SPs in representing the set $\{\mathbf{x}_i\}_{i=1}^n$. The performance of $k$-DPPs falls between that of SPs and Grids. Note that the $k$-DPP algorithm becomes unavailable as $\nu$ and $k$ increase. When the covariance is smoother, $k$-DPPs tend to be more dispersed. However, placing an excessive number of $k$-DPPs within a fixed domain would disrupt the negative correlations between them. Rands exhibit the worst performance because of the lack of the space-filling property. 

\begin{figure}[!h]
\centering
\subfigure[Strong correlation]{
\label{fig:subfig:Set1} 
\includegraphics[scale=0.5]{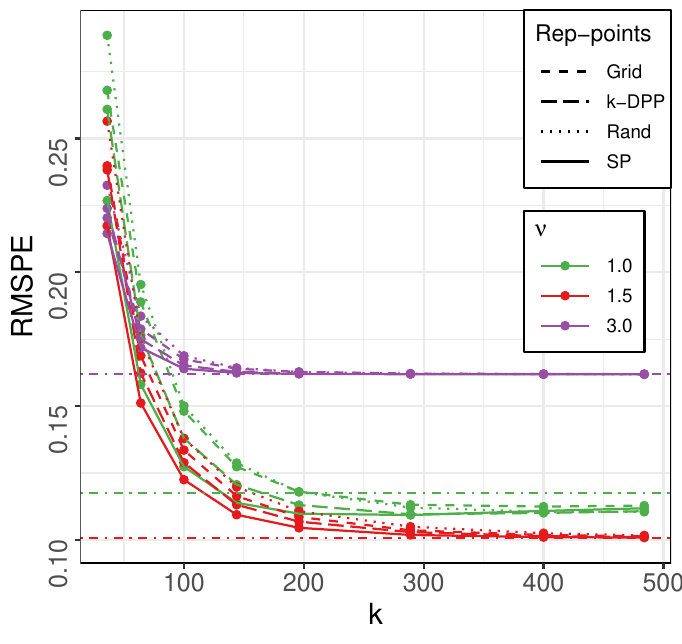}}
\hspace{5mm}
\subfigure[Weak correlation]{
\label{fig:subfig:Set2} 
\includegraphics[scale=0.5]{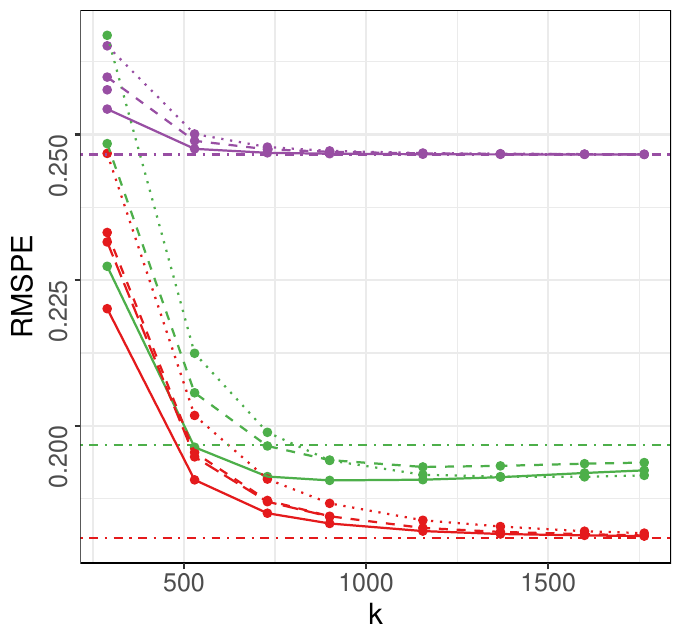}}\\
\subfigure[Consistent]{
\label{fig:subfig:Set3} 
\includegraphics[scale=0.5]{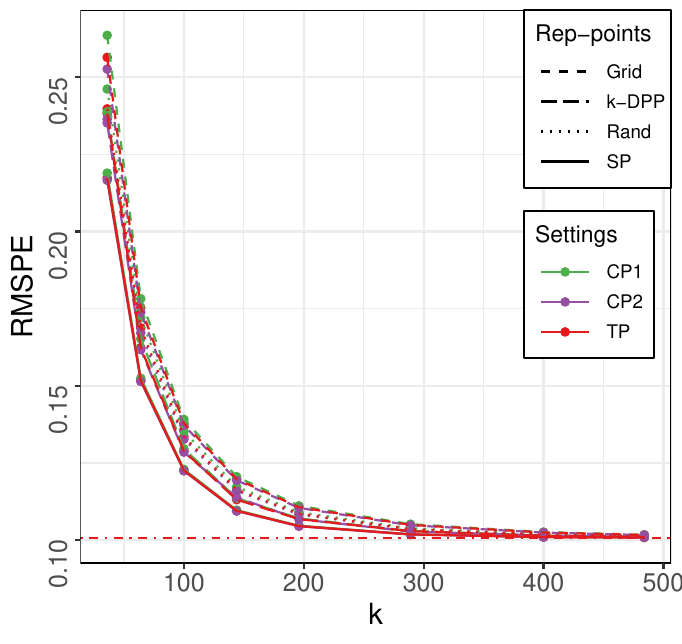}}
\hspace{5mm}
\subfigure[Nonuniform $F_{\mathbf{x}}$]{
\label{fig:subfig:Set4} 
\includegraphics[scale=0.5]{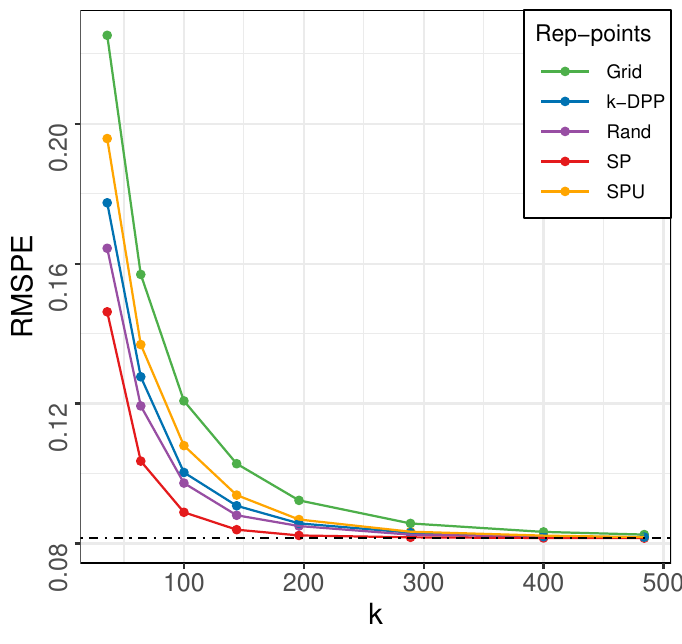}}
\caption{Comparison of various rep-points under four scenarios (a)--(d). The dashed--dotted lines represent the RMSPEs of predictions based on the full data. The legend in (b) is the same as that in (a). CP1, CP2, and TP in (c) correspond to the consistent parameters $\hat{\bm{\theta}}_{s1}$, $\hat{\bm{\theta}}_{s2}$, and true parameter $\bm{\theta}_s$, respectively, in the third setting.}
\label{fig:VersusK} 
\end{figure}

To understand the influence of the imposed covariance, we compare curves of the same type but shown in different colors. When $k$ is not extremely small, the predictions given by the true covariance have the lowest RMSPEs, indicating that a wrongly specified covariance can typically weaken the prediction performance. When $k$ is quite small, imposing an oversmoothed covariance function may yield a more accurate prediction. A larger value of $\hat\gamma$ (or $\hat\nu$) results in the earlier convergence of the corresponding $\tilde f_{\hat c}$; this agrees with our theoretical results, according to which the required $k$ for the convergence of $\tilde{f}_{\hat c}$ is of $O(n^{2/\hat\gamma})$. 

The predictions shown in Fig.~\ref{fig:subfig:Set1} have smaller RMSPEs than those shown in Fig.~\ref{fig:subfig:Set2}. The second setting involves a more complex structure with a smaller $\psi$ and hence a smaller $\gamma$. Moreover, $\tilde{f}_{\hat c}$ and $\tilde{f}_{c}$ in Fig.~\ref{fig:subfig:Set2} require a larger $k$ to achieve convergence, further demonstrating the influence of shortening of the range parameter $\psi$. In particular, define $K$ as the size of rep-points such that $\tilde{f}_c$ based on SPs achieves convergence. Remember that $K$ has a higher order than $O(n^{2/\gamma})$. Then, both $\nu$ and $\psi$ influence $K$. When $\psi$ is fixed, $K$ changes less with $\nu$. For example, $K$ for different lines (in different colors) in Fig.~\ref{fig:subfig:Set1} are all within $500$. However, when we change the value of $\psi$, $K$ drastically varies. For example, the $K$ for the red solid line in Fig.~\ref{fig:subfig:Set2} is much larger than that shown in Fig.~\ref{fig:subfig:Set1}. 

Figure~\ref{fig:subfig:Set3} shows the performance of predictions with consistent parameters. Curves represented in different colors but of the same type are approximately coincident. That is, by controlling the choice of rep-points, we can achieve $\tilde{f}_{\hat c}$ with consistent $\hat c$ to give an asymptotically identical prediction performance. This result agrees with our discussion reported in Section~\ref{sec:subsec:gammavalues} that consistent covariance functions have asymptotic identical values of $\hat\gamma$, which determine the performance of $\tilde{f}_{\hat c}$. Moreover, we confirmed the conclusion of \citet{zhang2004}, i.e., an incorrect but consistent $\hat c$ has the same prediction performance as $c$. 

\begin{table}[!b]
\centering
\caption{$E(F_k^*,F_n)$ for multiple rep-points with $n=5000$. Furthermore, $k$ is varied in the fourth scenario. The minimum $E(F_k^*,F_n)$ for each $k$ is highlighted in bold.}
\vspace{2mm}
\footnotesize
{
	\begin{tabular}{l|cccccccc}
			\toprule
			Rep-points & $k=36$ & $64$ & $100$ & $144$ & $196$ & $289$ & $400$ & $484$ \\ 
			\midrule
             {SPs} & $\textbf{0.001081}$ & $\textbf{0.000455}$ & $\textbf{0.000237}$ & $\textbf{0.000141}$ & $\textbf{0.000093}$ & $\textbf{0.000056}$ & $\textbf{0.000039}$ & $\textbf{0.000034}$ \\ \hline
			{Rands} & $0.013513$ & $0.007517$ & $0.004582$ & $0.003171$ & $0.002268$ & $0.001555$ & $0.001244$ & $0.001053$ \\ \hline
            {$k$-DPP} & $0.017491$ & $0.014829$ & $0.013808$ & $0.012821$ & $0.012957$ & $0.012167$ & $0.011523$ & $0.011610$ \\ \hline
			{SPUs} & $0.030338$ & $0.029755$ & $0.029427$ & $0.029356$ & $0.029288$ & $0.029318$ & $0.029379$ & $0.029305$ \\ \hline
			{Grids} & $0.034862$ & $0.032329$ & $0.031250$ & $0.030650$ & $0.030270$ & $0.029863$ & $0.029729$ & $0.029670$ \\
			\bottomrule	
	\end{tabular}}
\label{tab:Energydistance}
\end{table}

Fig.~\ref{fig:subfig:Set4} shows the performance of $\tilde{f}_c$ based on various rep-points when $\{\mathbf{x}_i\}_{i=1}^n$ are nonuniformly distributed over a compact region $\mathcal{X}$. $\tilde{f}_c$ based on SPs, $k$-DPPs, and Rands converge earlier than those based on SPUs and Grids. This is because they follow $F_{\mathbf{x}}$, whereas SPUs and Grids follow $U_{[0,1]^2}$. In other words, the distribution of rep-points should be as close to $F_n$ as possible. SPs meet this requirement by minimizing the energy distance between their empirical distribution and $F_n$. For an intuitive comparison, we provide $E(F_k^*,F_n)$ for the five rep-points in Table~\ref{tab:Energydistance}. All the energy distances decrease as $k$ grows. SPUs and Grids have larger $E(F_k^*,F_n)$ than SPs and Rands because they follow a distribution different from $F_{\mathbf{x}}$. Even with the same asymptotic distribution, there exist large gaps between the energy distances of various rep-points, particularly for small $k$. For example when $k=484$, the $E(F_n,F_k^*)$ for SPs is only $3.23\%$ and $0.29\%$ of those for Rands and $k$-DPPs, respectively. SPs are the most efficient in minimizing $E(F_k^*,F_n)$. The performance of $k$-DPPs is affected by the non-uniform distribution. 

\subsection{Influence of $\hat c$}
\label{sec:subsec:3.3}
First, we confirm Theorem~\ref{thm:mis} by exploring the orders of $V(\tilde{f}_{\hat c}-f)$ and $V(\tilde{f}_c-f)$ with respect to $n$ for a sufficiently large $k$. Let $F_{\mathbf{x}}=U_{[0,1]^2}$, $n=1000, 1500, \ldots, 7000$, and $n_t=5000$. For each $n$, we follow the procedure described in Section~\ref{sec:subsec:testrep} to generate the training data and to predict on the testing location based on SPs and Rands with $k=\lfloor 1.5\times n^{2/2.9}\rfloor$. We consider $c_s$ in the first scenario as the true covariance such that $\tilde{f}_{\hat c}$ and $\tilde{f}_c$ can converge to $\hat f_{\hat c}$ and $\hat f_c$ with a smaller $k$. The values of $\sigma^2$ and $\psi$ in $c_s$ are fixed; however, $\nu$ is set to be $0.5$ and $1.0$ for undersmoothed covariances and $2.0$ and $3.0$ for oversmoothed ones. 

The MSPEs for $\tilde{f}_{\hat c}$ and $\tilde{f}_c$ may well approximate $V(\tilde{f}_{\hat c}-f)$ and $V(\tilde{f}_{c}-f)$, respectively, as per the law of large numbers and because $n_t$ is quite large. Let MSPE($n$) be the MSPE obtained from the training data of size $n$. Then, by Theorem~\ref{thm:mis}, we have $\log\{\mathrm{MSPE}(n)\}\approx -\frac{\min(\gamma,\hat\gamma)-1}{\hat \gamma}\log n+\log C_0$,
where $C_0$ is the multiplicative constant of the order term. Our target is to fit the linear regressions above and analyze the slopes. 

\begin{figure}[!t]
\centering
\includegraphics[scale=0.5]{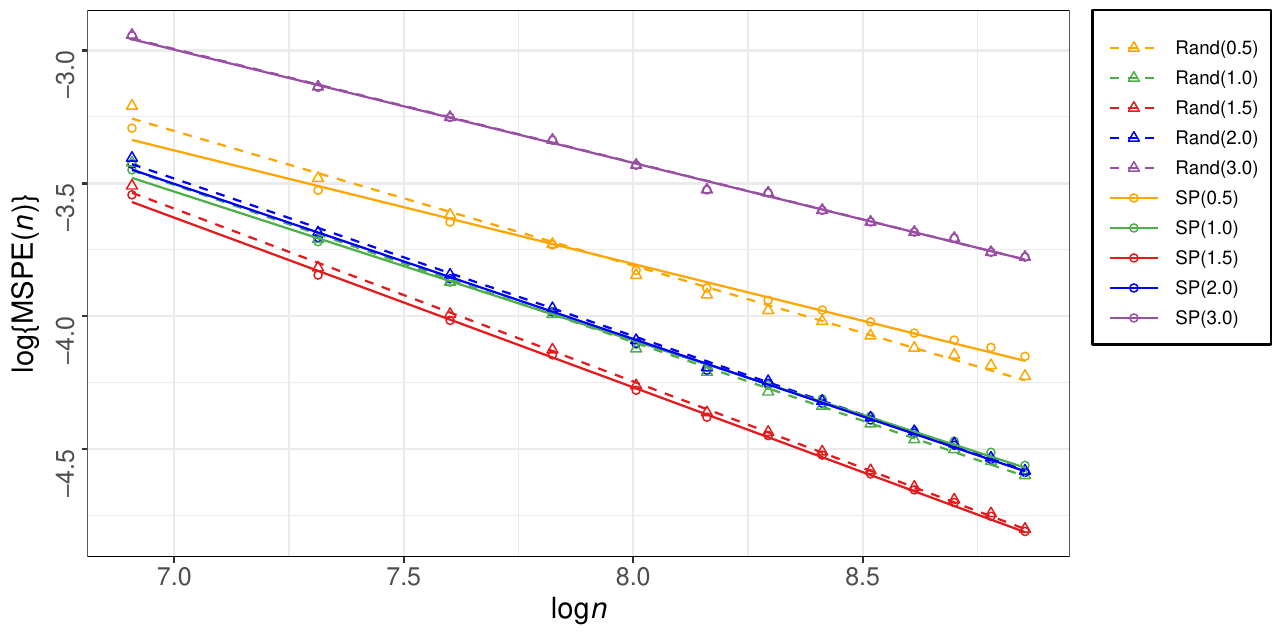}
\caption{Prediction performance of $\tilde{f}_{\hat c}$ and $\tilde{f}_c$ with varying $n$ and a sufficiently large $k$. Rand$(0.5)$ in the legend indicates the performance of $\tilde{f}_{\hat c}$ with Rands and $\nu=0.5$, and so on for the rest.}
\label{fig:slopes}
\end{figure} 
In Fig.~\ref{fig:slopes}, we show the average $\log\{\mathrm {MSPE}(n)\}$s of $100$ replicates (using points) and their fitted lines. For simplicity of discussion, only SPs and Rands are considered. The points for each $n$ and lines in the same color are nearly coincident in most cases, indicating that $k=\lfloor 1.5\times n^{2/2.9}\rfloor$ is enough for both rep-points in this scenario, and hence $\tilde{f}_{\hat c}$ and $\tilde{f}_c$ have converged to $\hat f_{\hat c}$ and $\hat f_c$, respectively. Red lines have the largest negative slopes (NSs) because the prediction under the true covariance has the best performance. Additionally, the value of $C_0$ for red lines is around $2.4$. For undersmoothed cases, the NSs for SP$(0.5)$ and Rand$(0.5)$ are smaller than those for SP$(1.0)$ and Rand$(1.0)$. For oversmoothed cases, the NSs for SP$(3.0)$ and Rand$(3.0)$ are smaller than those for SP$(2.0)$ and Rand$(2.0)$. The worse the estimate of $\hat c$, the larger is the $|\hat\gamma-\gamma|$; hence, the worse is the behavior of $\tilde{f}_{\hat c}$. 

\begin{figure}[b]
\centering
\includegraphics[scale=0.5]{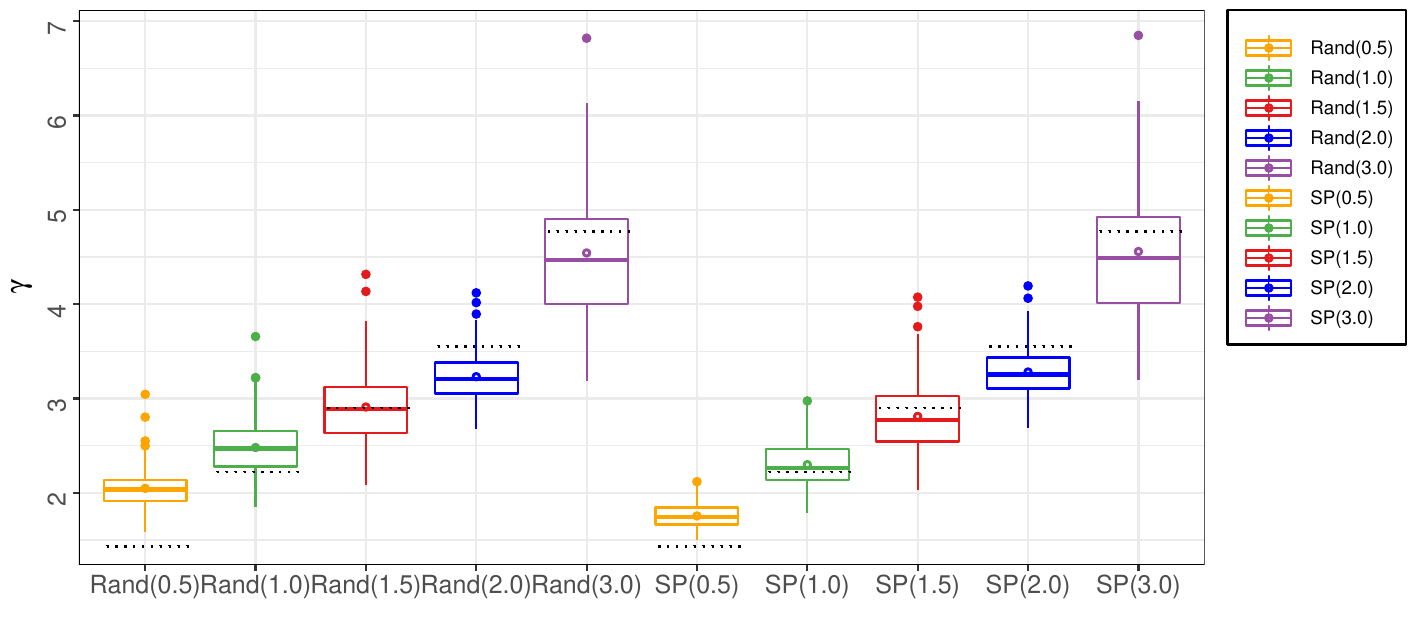}
\caption{Boxplots for the theoretically obtained  $\hat\gamma$ and $\gamma$. The circles in the middle of boxes are the average values. The black dotted lines are $\hat\gamma$ and $\gamma$, as reported in Section~\ref{sec:subsec:gammavalues}.}
\label{fig:VersusN}
\end{figure} 

As per \eqref{eq:tildefc} and \eqref{eq:mis}, we can evaluate the values of $\hat\gamma$ and $\gamma$ through NS. That is, $\hat\gamma\approx 1/(1-\mathrm{NS})$ for undersmoothed cases and $\hat\gamma\approx (\gamma-1)/(1-\mathrm{NS})$ for oversmoothed cases. The results from $100$ replicates are summarized in Fig.~\ref{fig:VersusN}. For comparison, we provide the empirical estimates of $\hat\gamma$ and $\gamma$ obtained as described in Section~\ref{sec:subsec:gammavalues}. In most cases, boxes cover the respective black dotted lines well, thereby confirming our theory. For the undersmoothed covariance with $\nu=0.5$, the black dotted lines are considerably lower than their boxes because their smoothness parameters are smaller than $2$. Hence our theorems do not apply in these cases. Our theory was further confirmed by setting $\sigma^2$ as $0.5$ while maintaining the other parameters fixed for both the true and mis-specified covariances. This verification is illustrated in Fig.~S1.

Finally, we demonstrate the improvement of $\tilde{f}_{\hat c}$ by adjusting $\tau^2$. Still using the first scenario and $\tau^2=0.27$, we generate $\{(\mathbf{x}_i,y_i)\}_{i=1}^n$ with $n=5000$ and $\{(\mathbf{x}_{ti},f_{ti})\}_{i=1}^{n_t}$ with $n_t=5000$ in the same way. The predictions $\tilde{f}_c(\mathbf{x}_{ti})$ and $\tilde{f}_{\hat c}(\mathbf{x}_{ti})$ are calculated using the adjusted $\tau^2=10^{(-3, 2, 1, 0, 0.5, 1, 1.5, 2)}\times 0.27$.  Fig.~\ref{fig:TAU} summarizes the average $\log(\rm MSPE)$s over $100$ replicates versus $a_{\tau}=\log_{10}(\tau^2/0.27)$. When the correct covariance ($\nu=1.5$) is used, the red curve achieves the lowest MSPE when $a_{\tau}=0$, and this lowest MSPE corresponds to the true value of $\tau^2$. When the covariance is mis-specified ($\nu=1$ or $3$), neither of the two curves reach the minimum MSPE under the true $\tau^2$. However, the performance of the oversmoothed covariance (with $\nu=3$) improves when $a_{\tau}$ is reduced to $-2$; the performance of the undersmoothed covariance (with $\nu=1$) improves when $a_{\tau}$ is increased to $0.5$. Furthermore, the lowest MSPEs of $\tilde{f}_{\hat c}$ are comparable to those of $\tilde{f}_c$; this result is in agreement with Theorem~\ref{thm:adjuestedtau} that $\tilde{f}_{\hat c}$ may have the same convergence rate as $\tilde{f}_c$ for an appropriate value of $\tau^2$.
\begin{figure}[!h]
\centering
\includegraphics[scale=0.5]{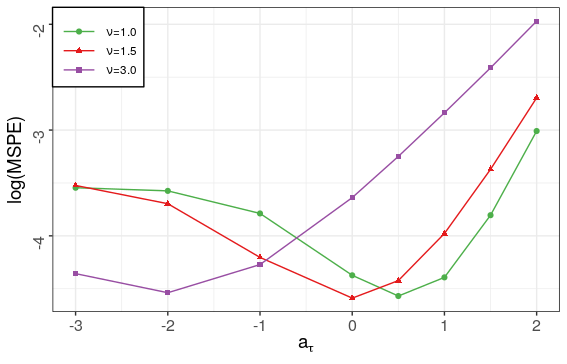}
\caption{Prediction performance of $\tilde{f}_{\hat c}$ and $\tilde{f}_c$ with varying $\tau^2$.}
\label{fig:TAU}
\end{figure} 
\vspace{-.7cm}

\section{Data Examples}
\label{sec:realdata}
This section shows the advantages of low-rank approximations with an estimated covariance and SPs on two real datasets: annual total precipitation anomalies and total column ozone data. We compare our method with other popular methods such as FRK \citep{FRK_intro}, LatticeKrig \citep{LKPackage}, and autoFRK \citep{autoFRKPackage}.

\subsection{Annual total precipitation anomalies data}
In the first application, we consider the data of annual total precipitation anomalies observed at $7352$ weather stations in the United States in 1962. This dataset was discussed by \citet{taper2008} and \citet{sang2013}. As shown in Fig.~S3, the locations are nonuniformly distributed over an irregular region $\mathcal{X}$. Without an obvious mean trend, nonstationarity or anisotropy \citep{taper2008}, we fit the data with a stationary Gaussian process with mean zero. In particular, we randomly divide the data into training data with $n=7000$ and testing data with $n_t=352$. Based on the training data, we apply the Vecchia approximation to obtain an approximated covariance estimation. This approximation can be implemented by the R package \textit{GpGp} \citep{GpGppackage}, where we set the covariance to be Mat\'{e}rn in \eqref{eq:matern}. Furthermore, we use \textit{ExaGeoStat} \citep{ExaGeoStat1} to calculate the exact covariance estimation as a benchmark. Using these approximated covariance and SPs of $k=210$, $500$, $750$ and $1000$, we obtain $\tilde{f}_{\hat c}(\mathbf{x}_{ti})$s. We calculate the $\mathrm{MSPE}=n_t^{-1}\sum_{i=1}^{n_t}\{y_{ti}-\tilde{f}_{\hat c}(\mathbf{x}_{ti})\}^2$ to measure the prediction performance of our method, where $\{y_{ti}\}_{i=1}^{n_t}$ are the responses of the testing data. 

Moreover, we consider the method of using the estimated covariance and full data (FULL) and three popular low-rank approximation methods mentioned earlier, namely, FRK, LatticeKrig, and autoFRK. The FRK performs inference on basic areal units (BAUs). It develops multiresolution basis functions and calculates coefficients by using the expectation-maximization (EM) algorithm. We set two and three levels of resolutions and exponential kernel functions. LatticeKrig induces the basis functions with regular lattices of different levels and uses the Markov random field assumption to develop a sparse precision matrix for random weights. The higher the level, the finer are the grid points. This method can be implemented by the R package \textit{LatticeKrig} \citep{LKPackage}. We test the performance of LatticeKrig with smoothness parameters $\nu=0.5$ and $1$, and spatially autoregressive weights AW=$4.1$, $5.0$, and $6.0$, two levels of resolutions, and $15$ grids points at the coarsest resolution. For additional details of these parameters, we refer the readers to \citet{LK2015}. The autoFRK develops adaptive basis functions corresponding to different scales with thin-plate splines. Based on a maximum number $B$ of bases, it selects $B^*\le B$ bases by the Akaike's information criterion (AIC). The method can be implemented by the R package \textit{autoFRK} \citep{autoFRKPackage}. Here we set $B$ as $210$, $500$, $750$ and $1000$. 

We repeat the above process $100$ times and report the average performance of these methods with regard to their MSPEs, computational time (seconds) of covariance estimation and prediction, and the actual number of bases. These results are summarized in Table~\ref{tab:prec}. For FULL and SP$(210)$--$(1000)$, Table~\ref{tab:prec} only lists the results of predictions using the estimated covariance calculated by \textit{GpGp}. Based on the exact covariance estimation calculated by the \textit{ExaGeoStat}, FULL and SP$(210)$--$(1000)$ have MSPEs $0.22096$, $0.31711$, $0.27879$, $0.26108$, and $0.25124$, respectively. For LatticeKrig, we only listed the best result among all the parameter settings, i.e., the result for $\nu=1$ and AW$=6.0$. FRK has the lowest MSPE when \#Bases is small but requires longer time. Moreover, the decrease in its MSPE is limited when \#Bases increases. Furthermore, with multiresolution bases, autoFRK behaves well when \#Bases is small. However, its computational time rapidly increases and it cannot be applied when $B$ is larger than about $1000$. LatticeKrig uses more bases and consumes more time. The performance of our method is less superior when \#Bases is small because SPs cannot represent the full data locations well. However, the performance is drastically improved as we increase $k$. SP$(1000)$ provides a smaller MSPE with less computational time compared to FRK$(3)$ and LatticeKrig, both of which use more bases. Thus, the superiority of our method is verified. 
\begin{table}[!t]
\caption{Performance of various methods for the annual total precipitation data. FULL represents the prediction based on full data and estimated covariance obtained by \textit{GpGp}.  SP$(210)$--$(1000)$ represent low-rank approximations based on $210$--$1000$ SPs and estimated covariance obtained by \textit{GpGp}. FRK$(2)$--$(3)$ represent the FRK method with two and three levels of resolutions, respectively. Furthermore, autoFRK$(210)$--$(1000)$ represent the autoFRK method with $B=210$--$1000$. The approximation with the minimum MSPE is marked in bold.}
\vspace{-15pt}
\footnotesize
{   
\begin{flushleft}
	\begin{tabular}{l|cccccc}
			\toprule
			Methods & FULL &SP($210$) & SP($500$) & SP($750$) & \textbf{SP$(\textbf{1000})$} & FRK($2$) \\ 
			\midrule
			 MSPE & $0.22097$ & $0.31694$ & $0.27897$ & $0.26138$ & $\textbf{0.25159}$ & $0.28670$ \\ \hline
			 Time (s) & $159$ & $9$ & $10$ & $11$ & $\textbf{13}$ & $15$  \\ \hline
			 \#Bases & $7000$ & $210$ & $500$ & $750$ & $\textbf{1000}$ & $210$   \\ 
			\bottomrule	
			Methods & FRK$(3)$ & autoFRK($210$) & autoFRK($500$) & autoFRK($750$) & autoFRK($1000$)  & LatticeKrig \\ 
			\midrule
			MSPE & $0.27809$ & $0.30461$ & $0.26883$ & $0.25975$ & -- & $0.29958$  \\ \hline
			Time (s) & $383$ & $6$ & $13$ & $21$ & -- & $51$  \\ \hline
			\#Bases & $1936$ & $205$ & $499$ & $748$ & -- & $1258$  \\ 
			\bottomrule	
	\end{tabular}
 \vspace{-20pt}
	\end{flushleft}}
\label{tab:prec}
\end{table}

\subsection{Total column ozone data}
In the second application, we demonstrate the feasibility of application of various methods on a large and nonstationary dataset, which includes $173,405$ observations of the level-$2$ total column ozone for October 1, 1988, along with their locations. The data were collected and preprocessed by NASA and were discussed by \citet{FRK2008} and \citet{meng2020}. The region of interest $\mathcal{X}$ has a regular shape. We fit the data with $Z\sim \mathrm{GP}(\mu,c)$, where $\mu$ is a constant and $c(\mathbf{x}_i,\mathbf{x}_j)=\sigma^2\Gamma(\nu)^{-1}2^{1-\nu}(\|\mathbf{L}\mathbf{x}_i-\mathbf{L}\mathbf{x}_j\|_2)^{\nu}\mathcal{K}_{\nu}(\|\mathbf{L}\mathbf{x}_i-\mathbf{L}\mathbf{x}_j\|_2)$; here, $\mu$, $\sigma^2$, $\nu$, and the $2\times 2$ symmetric matrix $\mathbf{L}$ are the parameters to be evaluated.

We divide the data into two parts: training data with $n=150,000$, and testing data with $n_t=23,405$. We evaluate the covariance function using the R package \textit{GpGp}. The SPs of $k=189$, $500$, $1361$, and $1755$ are generated for facilitating better comparison with the other methods. We fitted the low-rank approximation based on the estimated covariance and SPs induced by the training data and performed predictions on the testing data. For the total column ozone data, only FRK and LatticeKrig are considered for comparison as autoFRK is not applicable because of the large sample size. For implementing FRK, we set two and three levels of resolutions and Mat\'{e}rn kernel functions with $\nu=1.5$. For LatticeKrig, we explore the parameter settings described in the previous subsection, and we set $\nu=0.5$, AW$=4.1$, two levels of resolutions, and $15$ grid points at the coarsest resolution. Neither FRK nor LatticeKrig allow smoothly setting the number of bases. 


\begin{table}[!b]
\vspace{-10pt}
\caption{Performance of various methods for the total column ozone data. LatticeKrig$(2)$ and $(3)$ mean LatticeKrig with two and three levels of resolutions, respectively. The one with the minimum MSPE is marked in bold.}
\vspace{2mm}
\footnotesize
\centering
{   
	\begin{tabular}{l|cccc}
			\toprule
			Methods & FRK$(2)$ & FRK$(3)$ & SP($189$) & SP($500$) \\
			\midrule
			 MSPE & $105.94$ & $50.18$ & $181.12$ & $70.00$ \\ \hline
			 Time (s) & $299$ & $6521$ & $437$ & $462$    \\ \hline
			 \#Bases & $189$ & $1755$ & $189$ & $500$   \\ 
			\bottomrule	
			 Methods & SP($1361$) & \textbf{SP$\textbf{(1755)}$} & LatticeKrig$(2)$ & LatticeKrig$(3)$\\
			 \midrule
			 MSPE & $38.80$ & $\textbf{35.87}$ & $84.85$ & $38.31$\\ \hline
			 Time (s) & $531$ & $\textbf{563}$ & $132$ & $212$\\ \hline
			 \#Bases & $1361$ & $\textbf{1755}$ & $1361$ & $3840$\\
			 \bottomrule
	\end{tabular}}
\label{tab:ozone}
\end{table}

The average results over $100$ runs are summarized in Table~\ref{tab:ozone}. Similar to the first example, the MSPE of FRK slightly decreases but its computational cost dramatically increases with an increase in \#Bases. LatticeKrig, while being the fastest, does not have efficient bases. For instance, with the same \#Bases, the MSPE of LatticeKrig(2) is more than twice that of SP(1361). With comparable MSPEs, the \#Bases of LatticeKrig(3) is more than twice that of SP(1755). The basis functions induced by the SPs are more efficient. The MSPE of SP methods decreases significantly as \#Bases grows, with only a slight increase in computational time. In our method, the main time-consuming parts are the estimation of covariance parameters and the generation of the SPs. For the first issue, one can choose more efficient algorithms. To address the second issue, one can generate SPs from a random subset of the training data without affecting the space-filling property.

\section{Discussion}
\label{sec:discussion}
In this study, we derived the asymptotic performance of a low-rank GP prediction $\tilde{f}_{\hat c}$, investigated the influence of rep-points $D_k^*$ and estimated covariance $\hat c$ on the convergence rate of $\tilde{f}_{\hat c}$. Using the concept of energy distance, we demonstrated that the distribution of rep-points should be as close as possible to that of the full data locations. This insight motivates the utilization of SPs, which have space-fillingness but also can best mimic the full data. We set the order of $k$ such that the convergence rate of $\tilde{f}_{\hat c}$ is equal to that of $\hat f_{\hat c}$ under certain regularity conditions. These conditions are exemplified in Section S3 to facilitate comprehension. With $\hat \gamma$ indicating the smoothness of $\hat c$, we provided the convergence rates of $\tilde{f}_{\hat c}$ and $\hat f_{\hat c}$ when sufficient rep-points are given, i.e., $V(\tilde{f}_{\hat c}-f)=V(\hat{f}_{\hat c}-f)=O_p[n^{-\{\min(\hat\gamma,\gamma)-1\}/\hat\gamma}]$. This indicates that the closer the $\hat \gamma$ is to $\gamma$, the faster is the convergence. Moreover, we demonstrated the performance improvement of $\tilde{f}_{\hat c}$ by adjusting the order of the nugget effect. We demonstrated the value of $\gamma$ and confirmed our theoretical results via extensive numerical studies. Using two examples of real data, we validated the prediction of $\tilde{f}_{\hat c}$ based on SPs by comparing our results with the results obtained using other existing low-rank approximation methods.

In some of the existing methods, such as FRK and LatticeKrig, multiresolution bases are desirable to capture information at different scales. In this way, they can better represent the local dependency and evaluate the covariance. In this study, we used the bases induced by SPs without multiresolution capability because we fixed the covariance and focused only on the prediction accuracy. We preferred to consider the tasks of covariance estimation and prediction separately because they require different configurations of rep-points. Literature in the field of spatial designs suggests future avenues, e.g., samples in clusters behave better for covariance estimation, whereas widely distributed samples are better for prediction \citep{zhu2005spatial,zhu2006spatial,barbian2017spatial}. Therefore, one possible extension of our work is the configuration of rep-points for covariance estimation.
\vskip -40pt
\begin{center}
{\bf Supplementary Materials}
\end{center}
\vskip -10pt
A supplement to the main manuscript, including proofs for Theorems~\ref{thm:tildefc}--\ref{thm:adjuestedtau}, an illustrative example for regularity conditions [R1]--[R3], and additional simulated results, is provided in the online supplementary materials. (.pdf file)


The R code, data, and instructions for reproducing the results in the article are available at the GitHub repository: \href{https://github.com/SpatialTemporalStats/LRGPSP_Reproducibility_Materials}{SpatialTemporalStats/LRGPSP\underline{ }Reproducibility\underline{ }Materials}.
\vskip -40pt
\begin{center}
{\bf Acknowledgement}
\end{center}
\vskip -10pt
The authors thank the review team for comments that improved the content of this paper. This research was supported by Beijing Natural Science Foundation (No. Z200001), China Scholarship Council, Renmin University of China and by the King Abdullah University of Science and Technology.

\vskip -40pt
\begin{center}
{\bf Disclosure Statement}
\end{center}
\vskip -10pt
The authors report there are no competing interests to declare.



\vspace{-0.5cm}

\baselineskip 25 pt

\bibliographystyle{asa}

\bibliography{ref1}
\end{document}